\newif\ifpublic
\newcommand{\swastik}[1]{\todo[color=red!20!blue!15,inline,size=\footnotesize]{Swastik: #1}}
\newcommand{\sriprahladhuvacha}[1]{\todo[color=red!100!green!33,inline,size=\small]{ph: #1}}
\renewcommand{\eqref}[1]{\hyperref[#1]{(\ref*{#1})}}
\theoremstyle{theorem}
\newtheorem{theorem}{Theorem}[section]
\newtheorem{corollary}[theorem]{Corollary}
\newtheorem{lemma}[theorem]{Lemma}
\newtheorem{definition}[theorem]{Definition}
\newtheorem{claim}[theorem]{Claim}
\newtheorem{fact}[theorem]{Fact}
\theoremstyle{definition}
\newtheorem{remark}[theorem]{Remark}
\definecolor{darkred}{rgb}{0.5,0,0}
\definecolor{darkgreen}{rgb}{0,0.5,0}
\newcommand{\F}{{\mathbb F}}
\DeclareMathOperator{\E}{{\mathbb E}}
\newcommand{\ip}[1]{\left\langle #1 \right\rangle}
\newcommand{\abs}[1]{\ensuremath{\left\lvert #1 \right\rvert}}
\newcommand{\eps}{\varepsilon}
\renewcommand{\epsilon}{\eps}
\newcommand{\set}[1]{\left\{{#1}\right\}}
\newcommand{\vspan}{\mathrm{span}}
\newcommand{\bias}{\mathrm{bias}}
\newcommand{\poly}{\mathrm{Poly}}
\newcommand{\corr}{\mathrm{Corr}}
\newcommand{\smpoly}{\mathrm{smPoly}}
\newcommand{\cP}{\mathcal{P}}
\newcommand{\rank}{\mathrm{rank}}
\newcommand{\tr}{\mathrm{Trace}}
\newcommand{\sqbinom}{\genfrac{[}{]}{0pt}{}}
\newcommand{\calP}{\mathcal{P}}
\title{
On Multilinear Forms: Bias, Correlation, and Tensor Rank
}
\author{%
  Abhishek Bhrushundi\thanks{Dept. of Computer Science, Rutgers University, U.S.A. \texttt{abhishek.bhr@gmail.com}.} \and 
  Prahladh Harsha\thanks{Tata Institute of Fundamental Research,
    India.  \texttt{prahladh@tifr.res.in}. This work was done when the
  author was visiting Rutgers University/DIMACS, USA and Weizmann
  Institute of Science, Israel. This work was partially supported by
  the DIMACS/Simons Collaboration in Cryptography through NSF grant
  \#CNS-1523467 and the Israel-India ISF-UGC grant.} \and
  Pooya Hatami\thanks{Dept. of Computer Science, University of Texas at Austin, U.S.A.  \texttt{pooyahat@gmail.com}. Part of this work was done when the author was a postdoc at DIMACS. Supported by a Simons Investigator Award (\#409864, David Zuckerman)} \and 
  Swastik Kopparty\thanks{Dept. of Computer Science \& Dept. of Mathematics, Rutgers University, U.S.A. Research supported in part by NSF grants CCF-1253886 and CCF-1540634. \texttt{swastik.kopparty@gmail.com}.} \and 
  Mrinal Kumar\thanks{Center for Mathematical Sciences and Applications, Harvard University,  U.S.A.  \texttt{mrinalkumar08@gmail.com}.}
}
\date{}
\begin{document}

\sloppy

\maketitle

\setcounter{page}{1}
\thispagestyle{empty}

\begin{abstract}
In this paper, we prove new relations between the bias of multilinear forms, the correlation between multilinear forms and lower degree polynomials,
and the rank of tensors over $\F_2= \{0,1\}$. We show the following results for multilinear
forms and tensors.
\paragraph*{Correlation bounds. } We show that a random $d$-linear form has  exponentially low correlation with low-degree polynomials. More
  precisely, for $d \ll 2^{o(k)}$, we show that a random $d$-linear form
  $f(X_1,X_2, \dots, X_d) : \left(\F_2^{k}\right)^d \rightarrow \F_2$ has correlation
  $2^{-k(1-o(1))}$ with any polynomial of degree at most $d/2$. 

This result is proved by giving near-optimal bounds on the bias of
a random $d$-linear form, which is in turn proved by giving near-optimal
bounds on the probability that a random rank-$t$ $d$-linear form is
identically zero. 
\paragraph*{Tensor-rank vs Bias. } We show that if a $d$-dimensional tensor has small rank,
  then the bias of the associated $d$-linear form is large. More precisely,
given any $d$-dimensional
  tensor $$T :\underbrace{[k]\times \ldots [k]}_{\text{$d$ times}}\to
  \F_2$$ of rank at most $t$, the bias of the associated $d$-linear form
  $$f_T(X_1,\ldots,X_d) := \sum_{(i_1,\dots,i_d) \in [k]^d}
  T(i_1,i_2,\ldots, i_d) X_{1,i_1}\cdot X_{1,i_2}\cdots X_{d,i_d}$$ is
  at least  $\left(1-\frac1{2^{d-1}}\right)^t$. 

The above bias vs tensor-rank connection suggests a natural approach to
proving nontrivial tensor-rank lower bounds for $d=3$. In particular,
we use this approach to prove that the finite field multiplication tensor has
tensor rank at least $3.52 k$ matching the best known lower bound for any explicit tensor in three dimensions over $\F_2$.
\end{abstract}



\section{Introduction}

This work is motivated by two fundamental questions regarding ``explicit constructions" in complexity theory: finding functions
uncorrelated with low degree polynomials, and finding tensors with high tensor rank.
\paragraph{Functions uncorrelated with low degree polynomials. }

The first question is that of finding an explicit function uncorrelated with low degree polynomials. More concretely, we seek functions $f: \F_2^n \to \F_2$ 
such that for every polynomial $P(X_1, \ldots, X_n) \in \F_2[X_1, \ldots, X_n]$ of degree at most $\ell$ (assume $\ell \approx n^{0.1}$ say), 
$$ \Pr_{x \in \F_2^n} [ f(x) = P(x) ] \leq \frac{1}{2} + \epsilon_n.$$
It is well known (and easy to prove) that a random function $f$ has this property with $\epsilon_n$ superpolynomially small (and even exponentially small);
the challenge is to find an explicit function $f$.

A solution to this problem will have immediate applications in Boolean circuit complexity. It will give hard-on-average problems for $AC^0(\oplus)$, and via the
Nisan-Wigderson hardness vs. randomness technique~\cite{NisanW1994}, it will give pseudorandom generators against $AC^0(\oplus)$ (improving upon analogous results for $AC^0$ from the
late 1980s). The original motivation for an explicit function with
small $\eps_n$ came from the seminal work of
Razborov~\cite{Razborov1987} and Smolensky~\cite{Smolensky1987} who
showed that any function computable by a sub-exponential sized
$AC^{0}(\oplus)$ circuit satisfies $\epsilon_n = \Omega(1)$ and
  furthermore that the $MOD_3$ has $\epsilon_n = O(1)$. The
  Nisan-Wigderson paradigm~\cite{NisanW1994} of pseudorandom
  generator construction requires explicit functions with exponentially small
  $\epsilon_n$. The current best known constructions of explicit
  functions~\cite{Razborov1987,Smolensky1987,BenSassonK2012,ViolaW2008} that cannot be approximated by low-degree polynomials come
  in two flavors, (a) polynomially small $\epsilon_n$ (in fact,
  $O(1/\sqrt{n})$) for large degree bounds ($d$ as large as $n^{0.1}$)
  or (b) exponentially small $\epsilon_n$ for small degree bounds ($d<<\log n$). However, we do not know of any explicit function $f$
  that exhibits exponentially small $\epsilon_n$ against
  low-degree polynomials of polynomially large (or even
  super-logarithmically large) degree polynomials. For a nice survey
  on correlation with low degree polynomials, see \cite{Viola2009}.
\paragraph{Tensors with high rank. }
The second question is that of finding an explicit tensor of high
tensor rank.  Tensors are a high-dimensional generalization of
($2$-dimensional) matrices. Just as a matrix of size $k$ over a field
$\F$ is given by a map $M:[k]^2 \to \F$, a tensor $T$ of dimension $d$
and size $k$ is given by a map $T:[k]^d \to \F$. A tensor $T$ is said
to be of rank one if there exist vectors
$u_1, u_2, \ldots, u_d \in \F_2^k$ such that
$T= u_1 \otimes u_2 \otimes \cdots \otimes u_d$ or equivalently, for
all $(i_1,\dots,i_d) \in [k]^d$, we have
$T(i_1,\dots,i_d) = u_{1,i_1}\cdot u_{2,i_2}\cdots u_{d,i_d}$. A
tensor $T$ is said to be of tensor-rank at most $t$ if it can be written as
the sum of $t$ rank one tensors. We seek
tensors with tensor-rank as high as possible.

It is well known (and easy to prove) that a random tensor $T$ has tensor rank $t$ as large as $\Omega(k^{d-1}/d)$. The challenge is to
find an explicit such $T$ with tensor rank larger than $k^{\lfloor\frac{d}{2}\rfloor}$.  A substantial improvement on this lower bound for any explicit tensor will have
immediate applications in arithmetic circuit complexity; for $d = 3$, 
it will give improved arithmetic circuit lower
bounds~\cite{Strassen1973b}, and for large $d$ it will give
superpolynomial arithmetic formula lower
bounds~\cite{Raz2013,ChillaraKSV2016}. For general \emph{odd} $d$, a lower bound of $2k^{\lfloor d/2 \rfloor} + k - O(d\log k)$ was shown  for an explicit tensor by Alexeev et al.~\cite{AFT11}, while for \emph{even} $d$, no lower bounds better than the trivial bound $k^{\lfloor\frac{d}{2}\rfloor}$ are known for any explicit tensor.

 Unlike matrix rank, we do not
have a good understanding of tensor-rank even for 3-dimensional
tensors. For instance, it is known that for a given 3-dimensional
tensor $T$ over the rationals, the problem of deciding if the rank of
$T$ is at most $k$ is  NP-hard~\cite{Hastad1990}.  In the
case of dimension three, the tensor-rank of very specific tensors like the
matrix multiplication tensor~\cite{Blaser1999, Shpilka2003}, the
finite field multiplication tensor~\cite{ChudnovskyC1988,ShparlinskiTV1992} and the
polynomial multiplication tensor~\cite{BrownD1980,Kaminski2005} has
been studied in prior works. For this case, the current best lower
bound known for any explicit tensor over $\F_2$ is a lower bound of
$3.52k$ for the finite field multiplication tensor due to Chudnovsky
and Chudnovsky~\cite{ChudnovskyC1988,ShparlinskiTV1992}, which builds
on the lower bound result of Brown and Dobkin~\cite{BrownD1980} for
the polynomial multiplication tensor. For general fields, the best
known lower bound for any explicit tensor is $2.5k - o(k)$ for the
matrix multiplication tensor due to Bl\"{a}ser~\cite{Blaser1999}. 

 Also relevant to this discussion is a recent result of Effremenko et al.~\cite{EGOW17}, who showed that a fairly general class of lower bound techniques called \emph{rank methods} are not strong enough to give lower bounds on tensor rank stronger than $2^{d}\cdot k^{\lfloor d/2 \rfloor}$. In a nutshell, not only can we not prove good tensor rank lower bounds, we do not even have techniques, which `in principle' could be useful for such lower bounds! 
 
\subsection{Our results}
We make contributions to both the above questions by studying {\em multilinear
forms} and their {\em bias}. A $d$-linear form is a map $f : (\F_2^k)^d \to \F_2$ which is linear in each of its arguments.
The {\em bias} of a $d$-linear form is  defined as follows.
$$ \bias(f) := \left| {\mathbb E}_{x_1, \ldots, x_d \in \F_2^k} [ (-1)^{f(x_1, \ldots, x_k)}] \right| \, .$$
This  measures the difference between the probability of output $1$
and output $0$. Similarly, the correlation of a $d$-linear form $f$
with another function $g$ is defined as $\corr(f,g) := \bias(f-g)$,
which measures the difference between the probabilities (on a random
input) that $f$ and $g$ agree and disagree.

A $d$-linear form $f$ can naturally be viewed as a polynomial of degree $d$
in $n = kd$ variables. We can then ask, for some $\ell  d$,
is there a $d$-linear form $f$ such that the correlation of $f$ with every degree $\ell$ polynomial
in $\F_2[X_1, \ldots, X_n]$ is small?
Knowing the existence of a $d$-linear $f$ that achieves this small correlation property 
gives a significantly reduced search space for finding an explicit
$f$ with small correlation with lower degree polynomials. Our first result gives a positive answer to this question for a large range of $\ell$ and $d$.

\medskip
\noindent {\bf Theorem A (informal). \ }
{\it Let $d \ll o(n/log n)$ and let $k = \frac{n}{d}$.
Let $\ell < d/2$. Then with high probability, for a uniformly random $d$-linear form
$f: (\F_2^k)^d \to \F_2$, we have that for all polynomial $P(X_1, \ldots, X_n) \in \F_2[X_1, \ldots, X_n]$
of degree at most $\ell$:
$$ \corr(f, P) \leq 2^{-k ( 1-o(1))} = 2^{-\frac{n}{d}(1-o(1))}.$$

Moreover, for every $d$-linear form, there is a degree $0$ polynomial $P$ (namely the constant $0$ polynomial)
such that $$\corr(f, P) \geq \Omega(2^{-k}).$$
}
\medskip

For $d$ small enough ($\tilde{O}(\log n)$), the above theorem actually holds with $\ell = d-1$.

An important step towards proving Theorem A is a precise understanding of the distribution of the
{\em bias} of a random $d$-linear form. Along the way, we give tight upper bounds on the probability that the sum of $t$ random {\em rank-1} $d$-dimensional tensors equals $0$.

Previously, a beautiful result of Ben-Eliezer, Lovett and Hod~\cite{Ben-EliezerHL2012} showed that for
all $d < \alpha n$, there are polynomials $f(X_1, \ldots, X_n)$ of 
degree $d$ whose correlation with polynomials of degree $\ell = d - 1$ is
$2^{-\Omega(n/d)}$. The results are incomparable; the $f$ in~\cite{Ben-EliezerHL2012} need not come from a $d$-linear form,
and for this more general setting the bound $2^{-\Omega(n/d)}$ might not be tight,
but on the positive side~\cite{Ben-EliezerHL2012} can handle larger $d$ while proving correlation bounds against
polynomials with degree as large as $d-1$.

A $d$-linear form $f$ can also be naturally viewed as a $d$-dimensional tensor.
Indeed, $f$ can be completely specified by the tensor $T$ of values $f(e_{i_1}, e_{i_2}, \ldots, e_{i_d})$, as the $i_j$
vary in $[k]$. We can then ask, are there natural properties of the $d$-linear form $f$ 
which would imply that the tensor rank of $T$ is high?

We show that having low bias, which is a simple measure of pseudorandomness for $d$-linear forms, already implies something nontrivial about the tensor rank. We prove a lower bound on the tensor rank in terms of the bias of the form.

\medskip
\noindent {\bf Theorem B. \ }
{\it Let $f : (\F_2^k)^d \to \F_2$ be a $d$-linear form.
Let $T$ be its associated tensor, and let $t$ be the rank of $T$.
Then  $$\bias(f)\geq \left(1 - \frac{1}{2^{d-1}}\right)^t.$$  	

In particular, if $\bias(f) = 2^{-(1-o(1))k}$, then 
$$t \geq  k \cdot \log_2 \frac{2^{d-1}}{2^{d-1}-1}.$$

Moreover, for every $t$ there is a tensor $T$ with
tensor rank $t$ such that the following is true.
$$ \bias(f)\leq \left(1 - \frac{1}{2^{d-1}}\right)^t + \frac{d}{2^k}.$$
}
\medskip

This lower bound on tensor rank in terms of bias is almost optimal for any fixed $d$. It implies that any explicit $d$-linear form with low bias (such $d$-linear forms are easy to construct) automatically must have tensor rank $(1 + \Omega(1)) \cdot k$. Purely from the point of view of proving tensor rank lower bounds for explicit tensors, these results are only interesting in the case of $d = 3$ (for larger $d$ the implied tensor rank lower bounds fail to beat trivial explicit tensor rank lower bounds).

For $d=3$, this gives a natural and clean route to proving nontrivial
tensor rank lower bounds for explicit tensors. In particular,
trilinear forms with nearly minimal bias of of $2^{-(1-o(1))k}$ must have tensor rank at least $2.409k$ (which happens to be tight). 
A finer analysis of our arguments shows that
trilinear forms with {\em exactly} minimal bias of $\approx 2 \cdot 2^{-k}$, such as the
finite field multiplication tensor, have tensor rank $\geq 3.52 k$,
thus matching the best known explicit tensor rank lower bound for $3$-dimensional tensors~\cite{BrownD1980,ChudnovskyC1988,ShparlinskiTV1992}.
It also immediately implies that the matrix multiplication tensor has tensor rank $\geq 1.8 k$, which is nontrivial (but still far from the best known bound of
$3k$~\cite{Shpilka2003, Blaser1999}).

\subsection{Methods}

Underlying our main results, Theorem A and Theorem B, are 
two related combinatorial bounds involving rank-$t$ $d$-linear forms.
We now state these bounds for the special case of $d=3$.
For $i \in [t]$, let $x_i, y_i, z_i \in \F_2^k$. Let
$P_i(u, v, w)$ be the trilinear form defined as
$$ P_i (u,v,w) = \langle u, x_i \rangle \cdot \langle v, y_i \rangle \cdot \langle w, z_i \rangle.$$
Now, consider the trilinear form $P(u,v,w)$ given by
$$ P(u,v,w) = \sum_{i=1}^t P_i (u,v,w).$$
Then,  we have the following.
\begin{enumerate}
\item If $x_i, y_i, z_i$ are picked uniformly at random from $\F_2^k$,
then the probability that $P$ is identically $0$ is very small.
Concretely,
$$ \Pr_{x_i, y_i, z_i} [ P \equiv 0 ]$$
is about $2^{-kt}$, provided $t \ll k^2$.
This bound is essentially optimal.

\item  For arbitrary $x_i, y_i, z_i$, 
the bias of $P$ is large.
Concretely,
$$ \min_{x_i, y_i, z_i} [ \bias(P) ] \geq (3/4)^t.$$
This bound is also essentially optimal.
\end{enumerate}

We now give an outline of the proofs of Theorem A and Theorem B.

The proof of Theorem A follows the high-level outline of~\cite{Ben-EliezerHL2012}. We first use the method of moments
to show that for a fixed $n$-variate polynomial $P$ of degree $\ell$, the correlation of a random $d$-linear $f$ with $P$ is
small with extremely high probability. Then, by a union bound over all $P$, we
conclude that a random $f$ is uncorrelated with all $P$ with quite high probability.

Implementing this approach gives rise to some natural and interesting questions about
rank-1 tensors. How many rank-1 tensors can lie in a given low dimensional linear space
of tensors? Given a collection of $t$ random rank-1 tensors, what is the probability that
the dimension of the space spanned by them is small? What is the probability that the sum
of $t$ random rank-1 tensors equals $0$? We investigate these questions using linear-algebraic
ideas, and obtain near-optimal answers for all of them.

For example, the $d = 3$ case requires us to study the probability that
$$ \sum_{i=1}^t x_i \otimes y_i \otimes z_i = 0.$$
By some simple manipulations, this reduces to bounding the probability that
the linear space of matrices
$$ span \{ x_i \otimes y_i  : i \in [t] \} $$
has dimension $\leq t-r$. 
We bound this by studying the probability that $x_i \otimes y_i$ lies
in the linear space 
$$ span \{ x_j \otimes y_j  : j \in [i-1] \}.$$
This final probability is bounded using the following general theorem.

\medskip
\noindent {\bf Lemma. \ }
{\it
For any linear space $U \subseteq \F_2^{k^2}$ of dimension $u \ll k^2$, the probability 
that $x \otimes y \in U$ is at most $\tilde{O}\left(\frac{2^{u/k}}{2^{k}}\right)$.
}
\medskip

The proof of this lemma is hands on, and uses basic linear algebra and 
some elementary analytic inequalities. The key is to take an echelon form basis for $U$. We use this basis to understand which $\tilde{x}\in \F_2^k$ are ``important"; i.e., they have the property
that $\tilde{x} \otimes y \in U$ with noticeable probability for a random $y$.

The above lemma is essentially tight: with $U = V \otimes \F_2^k$ and
$\F_2^k \otimes V$ being tight examples. The sets of
the important $\tilde{x}$ in these two examples look very different. Because of
this, our final proof involves proving tight upper bounds on an analytic
maximization problem that has multiple very different global maxima.

For Theorem B, which gives a relationship between tensor rank and bias, the proof proceeds in the contrapositive. We show that any $d$-linear form whose underlying tensor has low rank must have high bias. Let us illustrate
the underlying ideas in the case of $d = 3$.
Here, we are given the $3$-linear form $P$, defined as 
$$ P(u,v,w) = \sum_{i=1}^t \langle x_i, u \rangle\cdot \langle y_i, v \rangle \cdot \langle z_i, w \rangle.$$
We want to show that this has high bias if $t$ is small.
The key claim that we show is the following.

\medskip
\noindent {\bf Lemma. \ }
{\it Let $y_1, \ldots, y_t, z_1, \ldots, z_t \in \F_2^t$.
For at least $(3/4)^t$ fraction of the pairs  $(v,w) \in \F_2^t$, we have
that for all $i \in [t]$:
$$\langle v, y_i \rangle \cdot \langle w, z_i \rangle = 0.$$}
\medskip

For any fixed $i$, the set of $(v,w)$ satisfying the above
is the union of two codimension $1$ hyperplanes in $\F_2^{2t}$,
and thus a random $(v,w)$ satisfies it with probability $3/4$.
The above lemma shows that the probability of all
these events happening together is at least as large as it would have been
had they been independent.

\section{Preliminaries}
Unless otherwise stated, we always work over the field $\F_2$. We use capital $X, Y, Z$ etc.\ to denote formal variables or sets of formal variables, and small letters $x, y, z$ to denote instantiations of these formal variables. 

For integers $n,d\geq 0$, denote by $\poly(n,d)$ the set of all degree $\leq d$ multilinear polynomials in $\F_2[X]$, where $X=\{X_1,...,X_n\}$ is a variable set. Note that every $f\in \poly(n,d)$ naturally corresponds to a unique map $f:\F_2^n \to \F_2$. 
\subsection{Bias and Correlation}
Two fundamental notions used in this paper are those of bias and correlation, which we now define.
\begin{definition}[Bias]
Bias of a function $f:\F_2^n\to \{0,1\}$ is defined as 
$$
\bias(f):= \left| \E_{x\in \F_2^n} (-1)^{f(x)}\right|.
$$
The bias of an $\F_2$-valued function $f:\F_2^n\to \F_2$ is defined as $\bias(f):=\bias(\iota(f))$, where $\iota$ is the standard map from $\F_2$ to $\{0,1\}$. 
\end{definition}

\begin{definition}[Correlation]
We define the correlation between two functions $f,g: \F_2^n\to \F_2$, by 
$$
\corr(f,g):= \bias(f-g) \, .
$$ 
\end{definition}

Given a function $f:\F_2^n\to \F_2$, we will be interested in its maximum correlation with low degree polynomials. Towards this we define 
$$
\corr(f,d):= \max_{g\in \poly(n,d)} \corr(f,g) \, . 
$$

More generally, given a class $\mathcal{C}$ of functions, we define
$$
\corr(f,\mathcal{C}):= \max_{g\in \mathcal{C}} \corr(f,g) \, .
$$

\subsection{Tensors and $d$-linear forms}

Tensors are  generalizations of matrices to higher dimensions.
\begin{definition}[Tensors and Tensor rank]\label{def:tensor rank}
Let $k$ and $d$ be natural numbers.  A $d$ dimensional tensor $T$ of size $k$ over a field $\F$ is a map $T: [k]^d \rightarrow \F$. $T$ is said to be of rank one if there exist $d$ vectors $u_1, u_2, \ldots, u_d : [k] \rightarrow \F$ such that for every $(i_1, i_2, \ldots, i_d) \in [k]^d$, $T(i_1, i_2, \ldots, i_d) = \prod_{j = 1}^d u_j(i_j)$. The rank of $T$ is the minimum $t$ such that $T$ can be written as a sum of $t$ rank one tensors. 
\end{definition}
Every matrix can be naturally associated with a bilinear polynomial, and in some cases, one can study the properties of this bilinear polynomial as a proxy of studying various properties of the matrix itself. This paradigm also generalizes to tensors, as the following definition indicates.
\begin{definition}[Tensors as Multilinear Forms] 
Let $T : [k]^d \rightarrow \F$ be a $d$ dimensional tensor. Then, the \emph{set-multilinear} polynomial associated with $T$ is the polynomial $f_T$ in variables $\set{X_{i,j} : i \in [d], j \in [k]}$ over $\F$ defined as follows. 
\[
f_T(X_{1,1}, X_{1,2}, \ldots, X_{d, k}) = \sum_{(i_1, i_2, \ldots, i_d) \in [k]^d} T(i_1, i_2, \ldots, i_d)\cdot \prod_{j = 1}^d X_{j, i_j}.
\]
\end{definition}
Given the above association between $d$-dimensional tensors and
$d$-linear forms, we will use the terms tensor and $d$-linear form
interchangeably.  
\subsection{Some explicit tensors}
We now define some explicit tensors which we use at various places in this paper. We start with the trace function.
\subsubsection{Trace tensor}

\begin{definition}\label{def:trace map}
$\tr:\F_{2^k} \rightarrow \F_2$ is the $\F_2$-linear map defined as follows. 
\[
\tr(\alpha) = \alpha + \alpha^2 + \ldots + \alpha^{2^{k-1}} \, .
\]
 \end{definition}
The $\tr$ map will be useful for us as we define the candidate hard tensor for our lower bounds.
\begin{definition}\label{def:trace polynomial}
Let $Tr:\F_2^{k\times k\times k} \rightarrow \F_2$ be the function defined as follows.
$$
Tr(X,Y,Z) := \tr(XYZ),
$$
where $XYZ$ denotes multiplication over the larger field $\F_{2^k}$ when $X = (X_1, X_2, \ldots, X_k), Y = (Y_1, Y_2, \ldots, Y_k), Z = (Z_1, Z_2, \ldots Z_k)$ are viewed as encodings of elements in $\F_{2^k}$.
\end{definition}
Since $\tr$ is an $\F_2$-linear map, the function $Tr(X, Y, Z)$ can be
viewed as a $3$-linear polynomial in the variables $X = (X_1,
X_2, \ldots, X_k), Y = (Y_1, Y_2, \ldots, Y_k), Z = (Z_1, Z_2, \ldots
Z_k)$. For the rest of this paper, when we say $Tr(X, Y, Z)$, we refer
to this natural $3$-linear polynomial and the three dimensional
tensor associated with it.  We remark that, upto change of basis, this
is the finite field multiplication tensor, which was analyzed by
Chudnovsky-Chudnovsky~\cite{ChudnovskyC1988} and Shparlinksi-Tsfasman-Vladut~\cite{ShparlinskiTV1992}.

\subsubsection{Matrix multiplication tensor}
\begin{definition}\label{def:mm tensor}
The tensor corresponding to the product of two $n\times n$ matrices is defined as 
\[
M_n(X, Y, Z) = \sum_{i  = 1}^n \sum_{j = 1}^n \sum_{k = 1}^n X_{i,j} Y_{j, k} Z_{i,k} \, .
\]
Here, $X = \set{X_{i,j} : i, j \in [n]}, Y = \set{Y_{i, j} : i, j \in [n]}, Z = \set{Z_{i, j} : i, j \in [n]}$.
\end{definition}
Note that $M_n(X,Y,Z)$ is the trace of the matrix product $X\cdot Y
\cdot Z^T$. In other words, $M_n(X,Y,Z^T) = \tr(X\cdot Y \cdot
Z)$. Note this is the matrix trace and is different from the trace function considered in the
previous section where we viewed $X, Y, Z$ as elements of the large field. 


\section{Correlation of random $d$-linear forms}
In this section, we study the correlation of random $d$-linear forms 
with lower degree polynomials.\\
Our main result in this section is the following theorem, which 
states that a random $d$-linear form is uncorrelated with 
degree $\ell$ polynomials under certain conditions.

\begin{theorem}
\label{thm:smallcorr}
Let $\ell, d, n$ be integers such that
$d \mid n$,  $d = o(\frac{n}{\log n})$ and $\ell < d/2$.
Set $k = n/d$.


Pick a uniformly random $d$-linear form $f: (\F_2^k)^d \to \F_2$.
Then, with probability $1-o(1)$, $f$ has the following property.
For all polynomials $P(X_1, \ldots,X_n) \in \F_2[X_1, \ldots, X_n]$
with degree at most $\ell$, we have,
$$ \corr(f, P) < 2^{-(1-o(1))n/d } \, .$$
\end{theorem}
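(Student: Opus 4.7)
The plan is the standard moment method for a fixed polynomial, combined with a union bound. I would begin by fixing any $P\in\F_2[X_1,\ldots,X_n]$ with $\deg P\leq\ell$, and consider the $2t$-th moment of $\bias(f-P)$ over the random $d$-linear form $f$. Expanding the absolute value using $|a|^{2t}=a^{2t}$ and writing $f$ via its random tensor $T\in\F_2^{k^d}$,
$$\E_f\bias(f-P)^{2t}=\E_{x_1,\ldots,x_{2t}\in(\F_2^k)^d}(-1)^{\sum_i P(x_i)}\cdot\E_T(-1)^{\langle T,\,S\rangle},$$
where $S=\sum_{i=1}^{2t}x_i^{(1)}\otimes\cdots\otimes x_i^{(d)}$. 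Since $T$ is uniform, the inner expectation equals $\mathbf{1}[S=0]$, and taking absolute values yields the clean bound $\E_f\bias(f-P)^{2t}\leq\Pr[S=0]$.

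The heart of the proof is then the combinatorial estimate previewed in the introduction: for $s$ independent uniform rank-$1$ $d$-tensors, their sum equals $0$ with probability at most $2^{-ks(1-o(1))}$, provided $s$ is appropriately smaller than $k^{d-1}$. This is established by peeling off one rank-$1$ term at a time and reducing to the subspace-containment lemma stated in the introduction: for any subspace $U\subseteq\F_2^{k^d}$ of dimension $u$, a random rank-$1$ tensor lies in $U$ with probability at most $\tilde{O}(2^{-k+u/k^{d-1}})$. Plugging $s=2t$ and applying Markov,
$$\Pr_f\bigl[\corr(f,P)\geq 2^{-k(1-\eta)}\bigr]\leq\frac{\E_f\bias(f-P)^{2t}}{2^{-2tk(1-\eta)}}\leq 2^{-2tk(\eta-o(1))}.$$

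A union bound over the $2^{\binom{n}{\leq\ell}}$ multilinear polynomials $P$ of degree $\leq\ell$ (multilinearity is without loss of generality over $\F_2$) finishes the argument, provided $t$ is chosen so that (i) $2t$ lies in the window $2t\ll k^{d-1}$ where the rank-$1$ sum estimate is near-tight, and (ii) $2tk\cdot\eta$ comfortably exceeds $\binom{n}{\leq\ell}$, so the union bound is beaten. The hypotheses $\ell<d/2$ and $d=o(n/\log n)$ ensure $\binom{n}{\leq\ell}$ is sufficiently smaller than the available budget $k^{d}$ that a suitable $(t,\eta)$ with $\eta\to 0$ exists.

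The main technical obstacle is the near-tight rank-$1$ tensor sum estimate, which in turn rests on the subspace-containment lemma. That lemma is sharp, with the bound $\tilde{O}(2^{-k+u/k^{d-1}})$ saturated at qualitatively different extremal subspaces (for $d=3$: $V\otimes\F_2^{k^2}$ versus $\F_2^{k^2}\otimes V$, whose sets of ``important'' first-factor vectors look completely different). The proof therefore reduces to an analytic optimization problem with multiple simultaneous global maxima, and pushing the constants all the way to $2^{-k(1-o(1))}$ rather than a weaker $2^{-\Omega(k)}$ is where most of the work lies.
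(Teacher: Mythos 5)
Your proposal follows exactly the paper's route: bound a high moment of $\bias(f-P)$ by the probability that a sum of random rank-$1$ tensors vanishes, apply the rank-$1$ sum estimate (proved via the subspace-containment lemma and the multi-maxima optimization), use Markov, and union bound over the $2^{\binom{n}{\leq\ell}}$ polynomials. The only cosmetic difference is your use of the $2t$-th moment to cleanly dispose of the absolute value in $\bias$; the paper writes the $t$-th moment (implicitly taking $t$ even) but otherwise the computations are identical.
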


Along the way, we develop several tools to understand the bias of 
random $d$-linear forms. For example, we show that a random
$d$-linear form is unbiased with extremely high probability.

\begin{theorem}
\label{thm:smbias}
Let $\epsilon > 0$ be fixed.
Let $d, k$ be integers with $d <  2^{\epsilon k/5}$, and consider a uniformly random
$d$-linear form $ f: (\F_2^k)^d \to \F_2$.
Then, 
$$\Pr[\bias(f) \ge 2^{-(1-\epsilon)k}] \le 2^{-\Omega(\epsilon^2 k^d)}\, .$$
\end{theorem}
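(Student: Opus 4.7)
The plan is to apply the method of moments for an even integer $t$ to be chosen later. First I compute (rather than just upper bound) $\E_f[\bias(f)^t]$. Expanding the $t$-th power as an expectation over $t$ independent copies $X^{(1)},\ldots,X^{(t)}$ of $(X_1,\ldots,X_d)$, and using that $f$ depends $\F_2$-linearly on its random coefficient tensor $T\in\F_2^{k^d}$, the inner expectation over $T$ collapses to the indicator that $Z := \sum_{s=1}^{t} X_1^{(s)}\otimes\cdots\otimes X_d^{(s)}$ equals $0$. Hence
\[
\E_f[\bias(f)^t] \;=\; p_t, \qquad p_t := \Pr\bigl[Z = 0\bigr],
\]
and it suffices to show that the probability that $t$ independent uniform rank-$1$ $d$-tensors sum to zero is very small.

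I would bound $p_t$ by conditioning on the last $d-1$ vectors of each sample. Set $T^{(s)} := X_2^{(s)}\otimes\cdots\otimes X_d^{(s)}\in\F_2^{k^{d-1}}$ and fix these. For each coordinate $i_1\in[k]$, the $i_1$-slice of $Z=0$ reads $\sum_s X_{1,i_1}^{(s)} T^{(s)} = 0$; the $t$-bit vector $(X_{1,i_1}^{(s)})_{s=1}^t$ is uniform in $\F_2^t$ and the $k$ such vectors are mutually independent across $i_1$, so the event $Z=0$ has conditional probability $2^{-kr}$, where $r := \dim\mathrm{span}\{T^{(s)} : s\in[t]\}$. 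Thus $p_t = \E[2^{-kr}]$. Writing $r = \sum_{s=1}^t \Delta_s$ with $\Delta_s := \mathbf{1}[T^{(s)}\notin \mathrm{span}\{T^{(s')}: s'<s\}]$ and applying the chain rule,
\[
p_t \;\le\; (p^* + 2^{-k})^t,
\]
where $p^*$ is the maximum probability, over subspaces $U \subseteq \F_2^{k^{d-1}}$ of dimension at most $t-1$, that a random rank-$1$ $(d-1)$-tensor lies in $U$.

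Markov's inequality with $B = 2^{-(1-\eps)k}$ then yields
\[
\Pr\bigl[\bias(f) \ge B\bigr] \;\le\; \frac{p_t}{B^t} \;\le\; \Bigl(\tfrac{p^*+2^{-k}}{B}\Bigr)^{t}.
\]
Choosing $t = \Theta(\eps k^{d-1})$, the desired $2^{-\Omega(\eps^2 k^d)}$ tail follows as long as $p^* \le 2^{-(1-c\eps)k}$ for a sufficiently small absolute constant $c$. For $d=3$ this is exactly the content of the matrix lemma stated in the introduction, $\Pr[x\otimes y\in U] \le \tilde O(2^{u/k - k})$ with $u \le t \lesssim \eps k^2$.

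The main obstacle is the higher-dimensional analogue of this subspace-inclusion bound: for general $d$, controlling the probability that a random rank-$1$ $(d-1)$-tensor falls in a fixed low-dimensional subspace of $\F_2^{k^{d-1}}$. The natural approach is to flatten $(d-1)$-tensors along their first factor and apply the matrix lemma inductively, carefully tracking how the ``important rows'' of $U$ propagate through the flattening. The hypothesis $d < 2^{\eps k/5}$ enters precisely to absorb the polylogarithmic slack that accumulates across this induction and to keep $t$ within the regime where the subspace-inclusion bound remains meaningful.
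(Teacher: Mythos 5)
Your moment computation and the reduction to $p_t = \Pr[Z=0]$, the slice argument giving $p_t = \E[2^{-kr}]$ with $r = \dim\operatorname{span}\{T^{(s)}\}$, and the one-step chain rule bound $p_t \le (p^* + 2^{-k})^t$ together with the choice $t = \Theta(\eps k^{d-1})$ all match the paper's proof of \cref{thm:smbias} and \cref{thm:tensorsumbound} essentially line for line (the paper packages the chain rule as a tail bound on $\dim\operatorname{span}$ via \cref{cor:onebyone}, but the two are equivalent). You have also correctly isolated the crux: everything hinges on the subspace-inclusion bound for rank-$1$ tensors, which is the paper's \cref{lem:randrankone}, and your proposed route — flatten one factor and induct, tracking the ``important rows'' — is indeed the paper's route.

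The genuine gap is that you stop exactly at the hard part and underestimate its difficulty. Flattening along the first coordinate block and taking an echelon basis of $U$ produces, for each of the $k$ rows, a slice subspace $U_i \subseteq (\F_2^k)^{\otimes(d-1)}$ with $\sum_i \dim U_i = \dim U$; the inductive application of the lemma to each $U_i$ leads not to a routine accounting step but to a nontrivial maximization of $\sum_{i=1}^k 2^{i-1}\, 2^{\dim(U_i)/k^{d-2}}$ subject to $\sum_i \dim U_i = u$ and $k^{d-1} \ge \dim U_1 \ge \cdots \ge \dim U_k \ge 0$. This maximization (\cref{thm:max}) has two structurally different global maxima — corresponding to $U = V\otimes \F_2^{k^{d-1}}$ (all mass on a few $U_i$) and $U = \F_2^k\otimes W$ (mass spread evenly) — and is resolved via convexity, reduction to extreme points of the polytope, and the inequality $(z^\lambda-1)(z^\beta-1) \le (z^{\beta\lambda}-1)(z-1)$. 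Without establishing this, the claim $p^* \le 2^{-(1-c\eps)k}$ is unsupported. A smaller inaccuracy: the hypothesis $d < 2^{\eps k/5}$ is not there to absorb ``polylogarithmic slack''; it absorbs the additive error $\frac{d-1}{2^k}$ that the induction accumulates (one $\frac{1}{2^k}$ per dimension), appearing as the first term in the bound $\Pr[T\in U] \le \frac{d}{2^k} + \frac{2^{u/k^{d-1}}}{2^k}$.
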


\begin{remark}
Note that any $d$-linear form $f(X_1,\dots,X_d)$ vanishes if any one of the block of
variables $X_1,\dots,X_d$ is zero. Hence, the bias of any
$d$-linear form (or equivalently its correlation with the constant 0
polynomial) is at least $2^{-k} =
2^{-n/d}$. \cref{thm:smbias} states that it is extremely unlikely for
a random $d$-linear form to have even slightly more bias while
\cref{thm:smallcorr} states that it is extremely unlikely for a random
$d$-linear form to have slightly better correlation with any
degree $\ell$ polynomial.
\end{remark}

The key ingredient in the proofs of the above theorems is 
the following theorem on the distribution of the sum
of random rank-$1$ tensors.

\begin{theorem}
\label{thm:tensorsumbound}
Let $\epsilon > 0$ be a constant.
 Let $d, k, t$ be integers with $d < 2^{\epsilon k/5}$, and $t < \frac{\epsilon}{5} k^{d-1}$. Let $\{x^{(i,j)}\}_{i \in [t], j \in [d]}$ be picked independently and uniformly distributed in $\F_2^k$.Then, 
 $$ \Pr\left[\sum_{i = 1}^t \bigotimes_{j =1}^d x^{(i,j)} = 0\right] \le 2^{-(1-\epsilon/2) \cdot kt}.$$
\end{theorem}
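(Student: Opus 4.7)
The plan is to reduce the $d$-tensor problem to a question about the dimension of the span of $t$ random rank-one $(d-1)$-tensors, by conditioning on the first $d-1$ factors. Define $T^{(i)} := \bigotimes_{j=1}^{d-1} x^{(i,j)} \in \F_2^{k^{d-1}}$ and flatten along the last coordinate, so that the event becomes $\sum_{i=1}^{t} T^{(i)} \otimes x^{(i,d)} = 0$ in $\F_2^{k^{d-1}} \otimes \F_2^k$. A direct linear-algebra computation---the one that handles $d=2$ by writing the sum as a matrix product $XY^{\top}$ and observing that $XY^{\top}=0$ iff each row of $Y$ lies in $\mathrm{null}(X)$---shows that, conditional on the $T^{(i)}$'s, this probability equals exactly $2^{-rk}$, where $r := \dim \mathrm{span}(T^{(1)}, \ldots, T^{(t)})$. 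It therefore suffices to prove $\E\!\left[2^{-rk}\right] \le 2^{-(1-\epsilon/2)kt}$.

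To control this expectation, rewrite $\E[2^{-rk}] = 2^{-tk}\,\E[2^{(t-r)k}]$ and expand $t-r = \sum_i \mathbbm{1}_{B_i}$, where $B_i$ is the event $T^{(i)} \in \mathrm{span}(T^{(1)},\ldots,T^{(i-1)})$. Iterated conditioning then gives
\[
\E\!\left[2^{(t-r)k}\right] = \E\!\left[\prod_{i=1}^{t}\!\bigl(1 + (2^k - 1)\,\mathbbm{1}_{B_i}\bigr)\right] \;\le\; \bigl(1 + (2^k - 1)\,p\bigr)^{t},
\]
where $p := \sup_{U:\,\dim U \le t-1} \Pr\!\bigl[\bigotimes_{j=1}^{d-1} x^{(j)} \in U\bigr]$ uniformly dominates every conditional probability $\Pr[B_i \mid T^{(1)},\ldots,T^{(i-1)}]$.

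The heart of the argument is a sharp bound on $p$: for every subspace $U \subseteq \F_2^{k^{d-1}}$ of dimension $u \ll k^{d-1}$, I need $\Pr\!\bigl[\bigotimes_{j=1}^{d-1} x^{(j)} \in U\bigr] \le 2^{-k + O(u/k^{d-2})}$ up to polylogarithmic factors, generalizing the $\widetilde O(2^{u/k - k})$ bound stated for the $d-1=2$ case. Following the outline in the paper, I would prove this by taking an echelon basis of $U$, identifying the set of ``important'' leading first-factor vectors $\tilde x$ (those for which $\tilde x \otimes (\cdots) \in U$ has non-negligible probability over the remaining factors), and then solving the resulting analytic maximization over all admissible $U$---whose extremal shapes genuinely differ (for $d-1=2$, the subspaces $V \otimes \F_2^k$ and $\F_2^k \otimes V$) and therefore require balanced but essentially separate analysis. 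Plugging $p \le 2^{-(1-\epsilon/5)k + o(k)}$ into the chain-rule bound yields $\E[2^{-rk}] \le 2^{-tk(1-\epsilon/5 - o(1))} \le 2^{-(1-\epsilon/2)kt}$ after absorbing lower-order terms using the hypotheses $d < 2^{\epsilon k/5}$ and $t < \tfrac{\epsilon}{5}k^{d-1}$. The main obstacle is this tensor-in-subspace lemma: any slack is amplified by the factor of $t$ in the final exponent, so the bound must be essentially tight, and the worst-case $U$ genuinely come in qualitatively different shapes, so a single estimate must handle a family of competing optimization problems with distinct extremizers.
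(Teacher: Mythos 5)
Your proof is correct and takes essentially the same route as the paper's: both reduce, via the exact conditional computation $\Pr[\text{sum}=0\mid T^{(1)},\dots,T^{(t)}]=2^{-rk}$ with $r=\dim\vspan\{T^{(i)}\}$, to bounding the expectation of $2^{-rk}$ by iterated conditioning on the events $B_i = \{T^{(i)}\in\vspan(T^{(1)},\dots,T^{(i-1)})\}$ together with the rank-one-tensor-in-subspace bound (\cref{lem:randrankone}). Your identity $2^{(t-r)k}=\prod_i\bigl(1+(2^k-1)\mathbbm{1}_{B_i}\bigr)$ is an algebraically tidier packaging of the paper's passage through \cref{cor:onebyone} (union bound over subsets $I$ of size $t-r$, then the binomial theorem), but expanding the product reproduces the same sum over $I\subseteq[t]$ and the same final estimate.
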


\begin{remark}
If any block of vectors (say wlog. $\{x^{(i,1)}\}_{i\in[t]}$, the
first block of vectors) are all $\overline{0}$ (this happens with
probability $2^{-kt}$), then the $d$-dimensional linear form $\sum_{i
  = 1}^t \bigotimes_{j =1}^d x^{(i,j)}=0$. The above theorem states
that the probability of the $d$-linear form vanishing is not
significantly larger.
\end{remark}

In turn, the proof of the above theorem is based on the following lemma, which gives an upper bound
on the probability that a random rank-$1$ tensor  lies in a fixed low dimensional subspace.

\begin{lemma} 
\label{lem:randrankone}
 Let $k,d$ be integers and $U$ be a subspace of $(\F_2^k)^{\otimes d}$ of dimension $u$.
Let $x_1, \ldots, x_d \in \F_2^k$ be picked independently and uniformly at random,
and let $T = \otimes_{i=1}^d x_i$. 
Then, 
 $$\Pr[T \in U] \le  \frac{d}{2^k} + \frac{2^{u/k^{d-1}}}{2^{k}} \, .$$
\end{lemma}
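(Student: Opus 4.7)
The plan is to prove the lemma by induction on $d$. The base case $d = 1$ is immediate: $\Pr[x_1 \in U] = 2^{u-k} \leq (1 + 2^u)/2^k$, which matches the claimed bound. For the inductive step, I would condition on the first factor. Define, for each $x_1 \in \F_2^k$, the subspace $U_{x_1} := \{w \in (\F_2^k)^{\otimes(d-1)} : x_1 \otimes w \in U\}$ and $u_{x_1} := \dim U_{x_1}$. The case $x_1 = 0$ contributes exactly $1/2^k$ to $\Pr[T \in U]$ (since $U_0 = (\F_2^k)^{\otimes(d-1)}$), while for $x_1 \neq 0$ the inductive hypothesis applied to $U_{x_1}$ yields $\Pr[x_2 \otimes \cdots \otimes x_d \in U_{x_1}] \leq (d-1)/2^k + 2^{u_{x_1}/k^{d-2}}/2^k$. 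Averaging over $x_1$, the lemma reduces to the moment estimate
\[
\frac{1}{2^k}\sum_{x_1 \in \F_2^k \setminus \{0\}} 2^{u_{x_1}/k^{d-2}} \;\leq\; 2^{u/k^{d-1}}.
\]

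To establish this moment estimate, the plan is to view $U$ via its first unfolding as a subspace of $k \times k^{d-1}$ matrices and put it in reduced row echelon form using a lex ordering of $[k]^d$ with the first coordinate most significant. Let the RREF pivots lie at positions $(r_j, s_j)$ with $r_j \in [k]$ and $s_j \in [k]^{d-1}$, and let $p_i := |\{j : r_j = i\}|$, so $\sum_i p_i = u$. The central structural step is to upper bound $u_{x_1}$ in terms of the pivot profile: for $x_1 \neq 0$ with top bit $i^* := \max\{i : (x_1)_i = 1\}$, expanding $x_1 \otimes w = \sum_j \alpha_j B_j$ in the RREF basis and matching entries at pivot positions forces $\alpha_j = 0$ for every $j$ with $r_j > i^*$ (because rows of $x_1 \otimes w$ with index $> i^*$ vanish, while each such RREF pivot $B_j$ contributes a $1$ at $(r_j, s_j)$ that no other basis element cancels). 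Combined with the constraints from rows $r < i^*$ where $(x_1)_r = 0$, this yields a bound of the form $u_{x_1} \leq g(p, i^*) \cdot k^{d-2}$ for an explicit function $g$ of the pivot profile $(p_1, \ldots, p_k)$ restricted to indices $\leq i^*$.

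Substituting the structural bound into the moment estimate reduces the proof to an elementary but subtle analytic inequality of the form
\[
\sum_{i^* = 1}^{k} 2^{i^* - 1} \cdot 2^{g(p, i^*)} \;\leq\; 2^k \cdot 2^{u/k^{d-1}},
\]
subject to $\sum_i p_i = u$ and $0 \leq p_i \leq k^{d-1}$. The two tight examples $V \otimes (\F_2^k)^{\otimes(d-1)}$ (pivot mass concentrated in rows $1, \ldots, \dim V$) and $(\F_2^k)^{\otimes(d-1)} \otimes V$ (pivot mass spread evenly across all rows) give two very different extremizers, which the analytic inequality must simultaneously accommodate.

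The main obstacle is exactly this \emph{multiple global maxima} phenomenon, as acknowledged in the paper's method overview. A single Jensen or AM-GM argument handles one extremizer but not the other, so the actual proof must split into cases: one where the dominant contribution comes from $x_1$'s whose top bit lies within the support of the pivot profile (invoking a row-wise concentrated bound on $u_{x_1}$), and one where the contribution spreads across many $x_1$'s (invoking the complementary bound $u_{x_1} \leq (\sum_{i \leq i^*} p_i) \cdot k^{d-2}$). Engineering the refined bound on $u_{x_1}$ so that it is tight in both regimes, and then verifying the resulting analytic inequality by elementary means, is where I expect the main technical work to lie.
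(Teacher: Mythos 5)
Your decomposition is genuinely different from the paper's, and that is where the trouble lies. You condition on the vector factor $x_1$ and try to control $u_{x_1} = \dim\{w : x_1 \otimes w \in U\}$ from the pivot profile $(p_1,\dots,p_k)$ of one RREF of $U$. The paper conditions the other way: it fixes the tensor factor $z = x_2 \otimes \cdots \otimes x_d$, and for each $z$ the echelon structure yields the clean statement $\Pr_y[y\otimes z \in U] \le 2^{|S(z)|-k}$ with $S(z) = \{i : z \in U_i\}$, where the $U_i$ are the block projections and satisfy $\sum_i \dim U_i = u$. That equality constraint on the $\dim U_i$'s is exactly what feeds the analytic maximization (\cref{thm:max}). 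Your constraint $\sum_i p_i = u$ plays the analogous role, but it does not control $u_{x_1}$ tightly, and this is a genuine gap rather than an unfinished calculation.

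Concretely, the bound your RREF argument actually proves is $u_{x_1} \le \sum_{i : (x_1)_i = 1} p_i$ (the weaker $\sum_{i\le i^*} p_i$ is worse still), and this is far from sufficient. Take $U = \F_2^k \otimes W$ with $\dim W = u/k$ and row-major RREF, so $p_i = u/k$ for every $i$; here the true value is $u_{x_1} = u/k$ for all nonzero $x_1$, but your bound gives $u_{x_1} \le |x_1|\cdot u/k$, which is loose by a factor of the Hamming weight. Plugging the bound into your moment estimate gives $\frac{1}{2^k}\sum_{x_1} 2^{(\sum_i (x_1)_i p_i)/k^{d-2}} = \prod_i \frac{1 + 2^{p_i/k^{d-2}}}{2}$; with $p_i \equiv u/k$ and $u$ in the mid-range this is roughly $2^{(u/k^{d-1})(k-1)}$, vastly larger than the target $2^{u/k^{d-1}}$. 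The situation is just as bad on the other extremizer $U = V\otimes(\F_2^k)^{\otimes(d-1)}$: with pivot mass concentrated in $a = u/k^{d-1}$ rows, $u_{x_1}$ is actually $0$ for every $x_1 \notin V$, but your bound is $k^{d-1}\cdot|\{i\le a : (x_1)_i=1\}|$, which is already huge for $x_1 = e_1 + e_{a+1}$. So neither of the two regimes you identify is handled by the pivot-profile bound, and no case split on $(p_1,\dots,p_k)$ alone can fix this: the same pivot profile is compatible with wildly different $\{u_{x_1}\}$ distributions. Moreover, for $d=2$ your ``reduction'' to the moment estimate is an identity (since $\Pr[T\in U] = 2^{-2k}\sum_{x_1} 2^{u_{x_1}}$), so the moment estimate is not easier than the lemma itself. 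The paper sidesteps all of this by working on the other side of the tensor product: fixing $z$ makes the event ``$y\otimes z\in U$'' a question about one vector $y\in\F_2^k$ landing in an affine subspace of codimension $k - |S(z)|$, which the echelon form controls exactly. You would need a substantially sharper structural handle on $u_{x_1}$ (capturing linear dependences among the pivot coefficients, not just which pivots can be nonzero) before your route could close.
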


\begin{remark}
Let $U = V \otimes (\F_2^k)^{\otimes (d-1)}$ where $V$ is a
$u/k^{d-1}$-dimensional subspace of $\F_2^k$. Note, $\dim(U) =
u$. Clearly, $\Pr[\otimes_{i=1}^d x_i \in U] = \Pr[x_1 \in V] =
2^{u/k^{d-1}}/2^k$. The above lemma states that the probability is not
significantly larger than this for any other $U$.
\end{remark}


In the next subsection, we show how \cref{thm:smallcorr} and \cref{thm:smbias} follow from \cref{thm:tensorsumbound}. After that, we prove
\cref{thm:tensorsumbound} by studying the distribution of the dimension of a collection of random rank $1$ tensors.

\subsection{Proofs of {\cref{thm:smallcorr}} and {\cref{thm:smbias}}}

We first prove \cref{thm:smbias}.
\begin{proof}[Proof of \cref{thm:smbias}]
We want to bound $\Pr_f[\bias(f) \ge 2^{-(1-\epsilon)k}]$. We shall do so by bounding the $t^{th}$ moment of $\bias(f)$ for a suitable choice of $t$ and applying Markov's inequality.

Let $T : [k]^d \to \F_2$ denote the tensor associated with $f$. Thus $T(i_1, \ldots, i_d)$ are all independent and uniformly distributed in $\F_2$.

We now compute the $t^{th}$ moment of $f$.
\begin{align*}
\E_f[(\bias(f))^t] &= \E_f \left[ \left( \E_{x^{(1)}, \ldots, x^{(d)} \sim \F_2^k}\left[ (-1)^{f\left(x^{(1)}, \ldots, x^{(d)}\right)}\right]\right)^t \right]\\
&= \E_f \left[ \prod_{i \in [t]} \left( \E_{x^{(i,1)}, \ldots, x^{(i,d)} \sim \F_2^k} \left[ (-1)^{f\left(x^{(i,1)}, \ldots, x^{(i,d)}\right)}\right]\right) \right]\\
&= \E_{\{x^{(i,j)}\}_{i \in [t], j \in [d]}} \left[ \E_{f} \left[ (-1)^{\sum_{i = 1}^t f\left(x^{(i,1)}, \ldots, x^{(i,d)}\right)} \right] \right]\\
&= \E_{\{x^{(i,j)}\}_{i \in [t], j \in [d]}} \left[ \prod_{(\ell_1, \ldots, \ell_d) \in [k]^d}\left( \E_{T(\ell_1, \ldots, \ell_d) \sim \F_2} \left[ (-1)^{T(\ell_1, \ldots, \ell_d)\cdot\left( \sum_{i =1}^t \prod_{j=1}^d x^{(i,j)}_{\ell_j} \right)}\right] \right) \right]\\
&= \E_{\{x^{(i,j)}\}_{i \in [t], j \in [d]}} \left[ \prod_{(\ell_1, \ldots, \ell_d) \in [k]^d} \mathbbm{1}_{\sum_{i =1}^t \prod_{j=1}^d x^{(i,j)}_{\ell_j} = 0} \right]\\
&= \E_{\{x^{(i,j)}\}_{i \in [t], j \in [d]}} \left[  \mathbbm{1}_{\forall (\ell_1, \ldots, \ell_d) \in [k]^d,\ \sum_{i =1}^t \prod_{j=1}^d x^{(i,j)}_{\ell_j} = 0} \right]\\
&= \Pr_{\{x^{(i,j)}\}_{i \in [t], j \in [d]}}\left[\forall (\ell_1, \ldots, \ell_d) \in [k]^d,\ \sum_{i =1}^t \prod_{j=1}^d x^{(i,j)}_{\ell_j} = 0\right]\\
&= \Pr_{\{x^{(i,j)}\}_{i \in [t], j \in [d]}}\left[\sum_{i = 1}^t \bigotimes_{j =1}^d x^{(i,j)} = 0\right].
\end{align*}
Setting $t = \frac{\epsilon}{10} k^{d-1}$, \cref{thm:tensorsumbound} tells us that 
$$\E_f[(\bias(f))^t]=2^{-\left(1 - \epsilon/2 \right) kt}.$$
Using Markov's inequality,
\begin{align*}
\Pr_f\left[\bias(f) \ge 2^{-(1-\epsilon)k}\right] \le
                                                    \frac{2^{-(1-\epsilon/2))kt}}{2^{-(1-\epsilon)kt}}\le
                                                                                                        2^{-\epsilon
                                                                                                        kt/2}
  \le 2^{- \Omega(\epsilon^2 k^d)}
\end{align*}
as claimed.

\end{proof}

We now use a similar argument to prove \cref{thm:smallcorr}.
\begin{proof}[Proof of \cref{thm:smallcorr}]
Fix an arbitrary $\epsilon > 0$.
Let $\mathcal C$ denote the space of degree $\leq \ell$ polynomials in
$\F_2[X_1, \ldots, X_n]$. We want to show that with high probability over
the choice of $f$, we have that for every $P \in \mathcal C$,
$ \corr(f, P) \leq 2^{-(1-\epsilon)k}$.

Fix $P \in \mathcal{C}$ and consider the $t^{th}$ moment of $\bias(f - P)$. Imitating the proof of \cref{thm:smbias}, we get
\begin{align*}
\E_f[(\bias(f - P))^t] &= \E_{\{x^{(i,j)}\}_{i \in [t], j \in [d]}} \left[ (-1)^{\sum_{i=1}^t P\left(x^{(i,1)}, \ldots, x^{(i,d)}\right)} \cdot\mathbbm{1}_{\forall (\ell_1, \ldots, \ell_d) \in [k]^d,\ \sum_{i =1}^t \prod_{j=1}^d x^{(i,j)}_{\ell_j} = 0} \right]\\
&\le \E_{\{x^{(i,j)}\}_{i \in [t], j \in [d]}} \left[ \mathbbm{1}_{\forall (\ell_1, \ldots, \ell_d) \in [k]^d,\ \sum_{i =1}^t \prod_{j=1}^d x^{(i,j)}_{\ell_j} = 0} \right]\\
&= \Pr\left[\sum_{i = 1}^t \bigotimes_{j =1}^d x^{(i,j)} = 0\right].
\end{align*}
Now we will apply \cref{thm:tensorsumbound}.
Observe that since $d = o(n/\log n)$, we have,
$$  d < 2^{\epsilon k / 5}.$$
As in the proof of \cref{thm:smbias}, we set $t = \frac{\epsilon}{10}k^{d-1}$, 
invoke \cref{thm:tensorsumbound} and apply Markov's inequality to get,
$$\Pr_f\left[\bias(f - P) \ge 2^{-(1-\epsilon)k}\right] \le 2^{-\epsilon^2 k^d/20}.$$

Now $\bias(f-P) = \corr(f,P)$.
Thus, by a union bound over all $P \in \mathcal C$, we have the following.
\begin{align}
\label{eqsmcorr}
\Pr_f\left[\corr(f, \mathcal{C}) \ge 2^{-(1-\epsilon)k}\right] \le |\mathcal{C}| \cdot 2^{-\epsilon^2 k^d/20}.
\end{align}

It remains to estimate $|\mathcal C|$.  We show below that $|\mathcal
C| = o(k^d)$. The proof of this lemma works for any other $C$ as long
as $C$ satisfies $|\mathcal
C| = o(k^d)$.
Note that $|\mathcal C| = 2^{{n \choose \leq \ell}}$.
Let $\delta$ denote $d/n$.
\begin{align*}
{n \choose \leq \ell} &\leq {n \choose \leq d/2} \leq \left( \frac{2e n}{d} \right)^{d/2} \leq \left( \frac{2e}{\delta} \right)^{\delta n/2}\\
&= o\left(\left( \frac{1}{\delta} \right)^{\delta n}\right) \quad \quad
  [\text{Since } \delta  = o(1)] \\
& = o( k^d ).
\end{align*}

Combining this with Equation~\eqref{eqsmcorr}, we get,
$$\Pr_f\left[\corr(f, \mathcal{C}) \ge 2^{-(1-\epsilon)k}\right] \le 2^{o(k^d)} \cdot 2^{-\epsilon^2 k^d/20}.$$

Since this holds for every $\epsilon > 0$, we get the desired result.
\end{proof}


\subsection{Random rank-1 tensors}

In this subsection, we first prove~\cref{lem:randrankone} on the probability that a random rank-$1$ tensor
lies in a fixed low-dimensional subspace. We then give a corollary of this lemma which bounds the probability
that a collection of random rank-$1$ tensors spans a very low dimensional subspace. This corollary will be used
in the proof of~\cref{thm:tensorsumbound}.

\begin{proof}[Proof of {\cref{lem:randrankone}}]
Define $$f_{d,k}(u) =  \left( 1 - (1-\frac{1}{2^k})^{d-1} \right)  + (1-\frac{1}{2^k})^{d-1} \cdot \frac{2^{u/k^{d-1}}}{2^{k}}.$$
We will prove, by induction on $d$, the following stronger bound. 
 $$\Pr[T \in U] \le  f_{d,k}(u).$$
The fact that this implies the lemma, follows from the observations that $1-\frac{d-1}{2^k} \leq (1-\frac{1}{2^k})^{d-1}$ and that $(1-\frac{1}{2^k})^{d-1} \leq 1$.
\paragraph*{Base case. } The $d=1$ case is trivial (using the observation that $f_{1,k}(u) = \frac{2^u}{2^k}$). We now show the statement holds for larger $d$.
\paragraph*{Induction step. }
Let $k' = k^{d-1}$. We will view $(\F_2^k)^{\otimes d}$ as $\F_2^k \otimes \F_2^{k'}$.
Every element $v$ of $(\F_2^k)^{\otimes d}$ can thus be written as a tuple $(v_1, \ldots, v_k)$, where each $v_i$
is an element of $\F_2^{k'}$ (thus the $k^d$ coordinates are partitioned into $k$ blocks of coordinates,
with each block having $k'$ coordinates).
We let $\pi_i : (\F_2^{k})^{\otimes d} \to \F_2^{k'}$ be the $i$th projection map, mapping $v$ to $v_i$.

With this convention, we take a basis for $U$ in \textit{row echelon form}.
Concretely, this gives us a basis $\mathcal B$ for $U$,
such that $\mathcal B$ is a disjoint union of $\mathcal B_1, \ldots, \mathcal B_k$ ($\mathcal B_j$ is
the set of basis vectors pivoted in the $j$'th block of coordinates),
such that,
\begin{itemize}
\item for all $v \in \mathcal B_j$ and $i <j$, $\pi_i(v) =  0$,
\item the vectors $\pi_j(v) \in \F_2^{k'}$, as $v$ varies in $\mathcal B_j$, are linearly independent.
\end{itemize}
Define $U_j = \vspan\{ \pi_{j}(v) \mid v \in \mathcal{B}_j) \}$. Thus we have
$\dim(U_j) = |\mathcal B_j|$ and
$$ \sum_{j=1}^k \dim(U_j) = \dim(U).$$
For $i > j$, we define a linear map $\psi_{ij} : U_j \to \F_2^{k'}$ by
defining $\psi_{ij}$ on a basis for $U_j$:
$$ \psi_{ij}( \pi_j(v) ) = \pi_i(v),\ \forall v \in B_j.$$
Then we have the following basic claim (which follows immediately from the above echelon form representation of $U$).
\begin{claim}
\label{claim:rowform}
Let $v \in (\F_2^k)^{\otimes d}$. Then $v \in U$ only if there exists $(u_1, \ldots, u_k) \in \prod_{i = 1}^k U_i$ such that for each $i \in [k]$ we have
$$\pi_i(v) = u_i + \sum_{j < i} \psi_{ij} (u_j).$$
\end{claim}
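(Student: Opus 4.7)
The plan is to unpack $v \in U$ along the echelon basis $\mathcal{B} = \bigsqcup_{j=1}^k \mathcal{B}_j$ and then read off the required $u_i$'s block by block. Since $v \in U$, first I would write $v = \sum_{j=1}^k \sum_{w \in \mathcal{B}_j} c_{j,w}\, w$ for some coefficients $c_{j,w} \in \F_2$. The natural candidate is to set
$$u_i := \sum_{w \in \mathcal{B}_i} c_{i,w}\, \pi_i(w) \in U_i,$$
i.e., $u_i$ collects the contribution to the $i$-th block coming from the basis vectors pivoted in block $i$. Note $u_i \in U_i$ by construction.

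Next I would apply $\pi_i$ to the decomposition of $v$ and split the sum according to whether $j > i$, $j = i$, or $j < i$. The $j > i$ terms vanish because the first defining property of the echelon basis says $\pi_i(w) = 0$ whenever $w \in \mathcal{B}_j$ and $i < j$. The $j = i$ terms contribute exactly $u_i$ by definition. For the $j < i$ terms, each summand is $c_{j,w}\, \pi_i(w) = c_{j,w}\, \psi_{ij}(\pi_j(w))$ by the definition of $\psi_{ij}$; pulling $\psi_{ij}$ out by linearity then gives $\psi_{ij}(u_j)$. Assembling the three pieces yields $\pi_i(v) = u_i + \sum_{j<i} \psi_{ij}(u_j)$, as required.

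The one subtle point I would flag, rather than a serious obstacle, is that $\psi_{ij}$ is only legitimately defined by the rule $\pi_j(w) \mapsto \pi_i(w)$ because the second property of the echelon form guarantees that $\{\pi_j(w) : w \in \mathcal{B}_j\}$ is a basis of $U_j$; linear extension is therefore unambiguous. Once this is in hand, the argument is purely bookkeeping and the claim follows. Everything is an "only if" statement, so no converse direction needs to be proved.
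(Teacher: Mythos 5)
Your proof is correct and is the natural unpacking of the echelon-form structure; the paper itself gives no proof here (it simply asserts the claim ``follows immediately from the above echelon form representation of $U$''), and your block-by-block computation of $\pi_i(v)$ with the split $j>i$, $j=i$, $j<i$ is exactly the intended argument.
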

To simplify notation, we will denote $x_1$ by $y$
and $\otimes_{i=2}^{d} x_i$ by $z$. We want to find
an upper bound on $\Pr[y \otimes z \in U]$. 
\begin{claim}
Let $\tilde{z} \in (\F_2^k)^{\otimes (d-1)}$ and $S = \{i\ |\ \tilde{z} \in U_i\}$, then, 
$$ \Pr_{y \in \F_2^k}[ y \otimes \tilde{z} \in U] \leq \frac{2^{|S|}}{2^k}.$$
\end{claim}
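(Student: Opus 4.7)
The plan is to reduce the probability to a dimension count, then evaluate that dimension by an induction over the $k$ blocks of coordinates, using \cref{claim:rowform} as the main structural tool.

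First I would observe that $W := \{y \in \F_2^k : y \otimes \tilde z \in U\}$ is the preimage of $U$ under the linear map $y \mapsto y \otimes \tilde z$, hence a linear subspace of $\F_2^k$. So it suffices to show $\dim W \le |S|$. To get a handle on $W$, I would apply \cref{claim:rowform}: since $\pi_i(y \otimes \tilde z) = y_i \tilde z$, membership $y\otimes \tilde z \in U$ is equivalent to the existence of $u_i \in U_i$ satisfying $y_i \tilde z = u_i + \sum_{j<i} \psi_{ij}(u_j)$ for every $i$. Solving recursively, $u_i$ is forced to equal $y_i \tilde z - \sum_{j<i}\psi_{ij}(u_j)$, a fixed linear function of $(y_1,\ldots,y_i)$, and the membership condition reduces to: $u_i \in U_i$ for all $i \in [k]$.

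I would then prove by induction on $\ell \in \{0,1,\ldots,k\}$ that $V_\ell := \{(y_1, \ldots, y_\ell) \in \F_2^\ell : u_i \in U_i \text{ for all } i \le \ell\}$ is a linear subspace of $\F_2^\ell$ of dimension at most $|S \cap [\ell]|$; setting $\ell = k$ yields $\dim W \le |S|$. The base case is trivial. In the inductive step I would split on whether $\ell \in S$, i.e., whether $\tilde z \in U_\ell$. If $\ell \in S$, then $y_\ell \tilde z \in U_\ell$ for every $y_\ell$, so the constraint $u_\ell \in U_\ell$ becomes a linear condition on $(y_1, \ldots, y_{\ell-1})$ alone, and $y_\ell$ is free; hence $\dim V_\ell \le \dim V_{\ell-1} + 1 \le |S \cap [\ell]|$. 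If $\ell \notin S$, I would write $V_\ell$ as the kernel of the linear map $V_{\ell-1} \times \F_2 \to \F_2^{k^{d-1}}/U_\ell$ sending $(y_1, \ldots, y_\ell) \mapsto y_\ell \tilde z - \sum_{j<\ell}\psi_{\ell j}(u_j) \bmod U_\ell$, and note that its image contains the nonzero coset $\tilde z + U_\ell$ (take $y_\ell = 1$ and $y_1=\cdots=y_{\ell-1}=0$). Rank-nullity then forces $\dim V_\ell \le \dim V_{\ell-1}$, which equals $|S \cap [\ell]|$ in this case.

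The main obstacle I anticipate is the $\ell \notin S$ step. The naive combinatorial bound ``at most one value of $y_\ell$ works for each admissible $(y_1, \ldots, y_{\ell-1})$'' gives $|V_\ell| \le |V_{\ell-1}|$ as sets, but does not by itself preserve the linear structure needed to continue the induction. Phrasing the step via the kernel of a linear map into $\F_2^{k^{d-1}}/U_\ell$ handles both concerns simultaneously: the subspace property is automatic, and the crucial input $\tilde z \notin U_\ell$ is precisely what guarantees the image has dimension at least one, yielding the factor-of-two savings encoded by $|S|$.
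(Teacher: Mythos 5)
Your proof is correct. It rests on the same core observation as the paper's: processing one coordinate block $i$ at a time, for $i\notin S$ (i.e.\ $\tilde z\notin U_i$) the two candidate values of $y_i$ produce $u_i$'s differing by $\tilde z$, so at most one of them can land in $U_i$. The paper packages this as a probabilistic chain rule, writing $\Pr[\forall i,\ u_i\in U_i]=\prod_i\Pr[u_i\in U_i\mid u_1\in U_1,\ldots,u_{i-1}\in U_{i-1}]$ and bounding each factor with $i\notin S$ by $1/2$; you package the identical count as a dimension bound on a tower of subspaces $V_0\subseteq\cdots\subseteq V_k$ via rank--nullity of the quotient map into $\F_2^{k^{d-1}}/U_\ell$. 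The two are close to a direct translation of one another. One remark on the obstacle you flag: preserving the linear structure of $V_\ell$ is not actually needed. The elementary counting bound you dismiss --- at most one valid extension $y_\ell$ per admissible prefix when $\ell\notin S$, at most two when $\ell\in S$ --- already gives $|V_k|\le 2^{|S|}$ by a trivial induction on cardinalities alone, with no appeal to $V_\ell$ being a subspace (and this is, in effect, what the paper's chain rule does). Your kernel-of-a-linear-map framing is a clean alternative but slightly over-engineers the step; the probability or cardinality argument is both shorter and what the paper uses.
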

\begin{proof}
For fixed $\tilde{z}$, given the random variable $v = y \otimes \tilde{z}$,
 we define random variables $u_1, u_2, ... , u_k$ by:
$u_i  := \pi_i (v) - \sum_{j < i} \psi_{ij}(u_j)$. Note that $\pi_i(v)=\pi_i(y
\otimes \tilde{z}) = y_i \tilde{z}$. Also note that $u_i$ is only a
function of $y_1,\dots,y_i$. By \cref{claim:rowform}, $v \in
U$ only if for all $i$, $u_i \in U_i$.
\begin{align*}
\Pr_{y \in \F_2^k}[ y \otimes \tilde{z} \in U]
&\le   \Pr_{y} \left[ \forall i\leq k,\;  u_i \in U_i \right]\\
&= \prod_{i=1}^k \Pr \left[   u_i \in U_i \;\middle|\; u_1 \in U_1, \ldots, u_{i-1} \in U_{i-1} \right]\\
&= \prod_{i=1}^k \E_{u_1 \in U_1,\ldots,u_{i-1} \in U_{i-1}}\left[\Pr_{u_i}\left[  u_i
\in U_i \;\middle|\; u_1,\ldots, u_{i-1} \right] \right] \\
&= \prod_{i=1}^k \E_{u_1 \in U_1, \ldots, u_{i-1} \in U_{i-1}} \left[Pr_{u_i} \left[
\pi_i(v)  - \sum_{j < i} \psi_{ij} (u_j) \in U_i \;\middle|\; u_1,\ldots ,
u_{i-1} \right]\right]\\
&= \prod_{i=1}^k \E_{u_1 \in U_1,\ldots, u_{i-1} \in U_{i-1}}\left[Pr_{y_i}\left[   y_i
\tilde{z}  - \sum_{j < i} \psi_{ij} (u_j) \in U_i \;\middle|\; u_1 , \ldots ,u_{i-1} \right]\right]\\
& \leq\prod_{i \not\in S}   \left(\frac12\right) =
  \left(\frac12\right)^{k-|S|},
\end{align*}
where the last inequality follows since for every $i \notin S$ and
every vector $w$, at most one of $w$ and $w + \tilde{z}$ can lie in
$U_i$ (as $\tilde{z} \notin U_i$). 

\end{proof}
For $S \subseteq [k]$, let
$$ U_S = \bigcap_{i \in S} U_i \, .$$
Then,
\begin{align*}
\Pr_{y, z}[y \otimes z \in U] &\leq \mathbb E_{z} \left[ \frac{2^{\sum_{i=1}^k 1_{U_i}(z)}}{2^k}  \right] \quad \quad
  \text{[Follows from the above claim]}  \\
&= \frac{1}{2^k} \mathbb E_{z} \left[ \prod_{i=1}^k  2^{1_{U_i}(z)} \right]\\
&= \frac{1}{2^k}\mathbb E_{z} \left[ \prod_{i=1}^k  (1 + 1_{U_i}(z)) \right]\\
&= \frac{1}{2^k} \mathbb E_{z} \left[ \sum_{S \subseteq [k]} 1_{U_S}(z) \right] \\
&= \frac{1}{2^k} \sum_{S \subseteq [k]}  \Pr_z [ z \in U_S]. \\
\end{align*}
Now, observe that for each $i \in S$, we have $\Pr[z \in U_S] \leq \Pr[z \in U_i]$.
Thus if we sort the $U_i$ so that $\dim(U_1) \geq \dim(U_2) \geq \ldots \geq \dim(U_k)$,
then we have the following sequence of inequalities.
\begin{align*}
 \Pr_{y,z} [ y \otimes z \in U] &\leq \frac{1}{2^k} \left( 1 +  \sum_{i \in [k]} \sum_{S \subseteq [i], i \in S}  \Pr_z [ z \in U_S] \right)\\
&\leq \frac{1}{2^k} \left( 1 + \sum_{i \in [k]}  2^{i-1} \Pr_z [ z \in U_i]  \right)\\
&\leq \frac{1}{2^k}\left( 1 +  \sum_{i \in [k]}  2^{i-1} f_{d-1, k}(\dim(U_i)) \right),\\
\end{align*}
where the last step follows from the induction hypothesis. To find an upper bound for this last expression, we let $a_i = \dim(U_i)$.
We have the constraints
$$ \sum_i a_i = u,$$
$$ k' \geq a_1 \geq a_2 \geq \ldots \geq a_k \geq 0,$$
where $k' = k^{d-1}$, and we want to maximize an expression of the form
$$ \sum_{i=1}^k 2^{i-1} (\alpha + \beta 2^{a_i/k^{d-2}} ) = \alpha\cdot (2^{k} - 1) + \beta\cdot \left( \sum_{i = 1}^k 2^{i - 1 + a_i/k^{d-2}}\right) .$$
where $\alpha,\beta > 0$.

It is worth noting what happens in the two examples $U = V \otimes \F_2^{k'}$
and $U = \F_2^{k} \otimes W$, where $V \subseteq \F_2^{k}$ and $W \subseteq 
\F_2^{k'}$ are subspaces of the appropriate dimension. In the first case,
$a_1 = a_2 = \ldots = a_{u/k'} = k'$ and the remaining $a_i$ are $0$.
In the second case, all the $a_i = u/k$. Both are global maxima of
the expression we want to maximize! The existence of these very different
maxima makes this maximization problem somewhat tricky.

In~\cref{thm:max} we prove a tight upper bound for this function. For every $i \in [k]$, let $b_i = a_i/k^{d-2}$, and let $\tilde{u} = u/k^{d-2}$. Then, $b_1, b_2, \ldots, b_k$ and $\tilde{u}$ satisfy the constraints in the hypothesis of~\cref{thm:max}, and~\cref{thm:max} tells us that a global maxima is achieved when all the $a_i$ are equal to $\dim(U)/k$.
Thus,
\begin{align*}
 \Pr_{y,z} [ y \otimes z \in U] &\leq \frac{1}{2^k}\left( 1 +  \sum_{i \in [k]}  2^{i-1} f_{d-1, k}(u/k) \right)\\
  &= \frac{1}{2^k}\left(  1 +   (2^k-1) f_{d-1,k}(u/k) \right)\\
  &= \left( \frac{1}{2^k} + (1 - \frac{1}{2^k}) f_{d-1,k}(u/k) \right) \\
  &= f_{d,k}(u).
\end{align*}
This completes the induction step.
\end{proof}
\begin{theorem}\label{thm:max}
Let $k$ be a positive integer, and let $\tilde{u} \in [0, k^2]$ be a real number.
Suppose $b_1, b_2, \ldots, b_k$ are real numbers satisfying the following constraints.
\begin{align}
k \geq b_1 \geq b_2 \ldots \geq b_k \geq 0, \label{ordereq}\\
\sum_{i=1}^k b_i = \tilde{u}. \label{sumeq}
\end{align}
Then,
$$ \sum_{i=1}^{k} 2^{i-1} 2^{b_i} \leq \sum_{i=1}^{k} 2^{i-1} 2^{\tilde{u}/k} = (2^k-1) 2^{\tilde{u}/k}.$$
\end{theorem}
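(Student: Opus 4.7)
My approach is to treat this as a constrained maximization problem, reduce to finitely many candidate maximizers, and then prove a clean one-variable inequality. The objective $F(b) := \sum_{i=1}^k 2^{i-1} 2^{b_i}$ is a nonnegative combination of the convex functions $b_i \mapsto 2^{b_i}$, hence is convex on $\mathbb{R}^k$. The feasible region
$$\mathcal{P} := \{b \in \mathbb{R}^k : k \geq b_1 \geq \cdots \geq b_k \geq 0,\ \sum\nolimits_i b_i = \tilde{u}\}$$
is a compact convex polytope, so $F$ attains its maximum over $\mathcal{P}$ at a vertex, and it suffices to check the claimed bound at every vertex of $\mathcal P$.

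The polytope $\mathcal{P}$ sits inside the hyperplane $\sum_i b_i = \tilde{u}$ and is cut out by the $k+1$ inequalities $b_1 \leq k$,\ $b_i \geq b_{i+1}$ for $i = 1, \ldots, k-1$,\ and $b_k \geq 0$. A vertex must make $k-1$ of these tight, so at most two of the chain $k \geq b_1 \geq \cdots \geq b_k \geq 0$ can be strict; a short case analysis then shows that every vertex takes the \emph{three-block form}
$$b^* = (\underbrace{k, \ldots, k}_{p},\ \underbrace{c, \ldots, c}_{r},\ \underbrace{0, \ldots, 0}_{q})$$
for some integers $p, r, q \geq 0$ with $p + r + q = k$ and some $c \in [0,k]$ satisfying $pk + rc = \tilde{u}$. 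Computing the three geometric sums in $F(b^*)$ and regrouping yields the factorization
$$F(b^*) = 2^p \bigl[(2^k - 1) + (2^c - 1)(2^r - 1)\bigr].$$
Since the target upper bound is $(2^k - 1) \cdot 2^{\tilde{u}/k} = 2^p \cdot (2^k - 1) \cdot 2^{rc/k}$, dividing by $2^p$ reduces the claim to the product inequality $(2^c - 1)(2^r - 1) \leq (2^k - 1)(2^{rc/k} - 1)$.

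Setting $q := 2^{c/k} \in [1,2]$ (so that $2^c = q^k$ and $2^{rc/k} = q^r$) and clearing terms, this becomes the single-variable polynomial claim
$$f(q) := (2^r - 1)\, q^k - (2^k - 1)\, q^r + (2^k - 2^r) \leq 0 \quad \text{for } q \in [1, 2].$$
Direct substitution gives $f(1) = f(2) = 0$, and the derivative factors as $f'(q) = q^{r-1}\bigl[k(2^r - 1)\, q^{k-r} - r(2^k - 1)\bigr]$. For $0 < r < k$, the bracketed term is strictly monotone in $q$, so $f'$ has at most one sign change on $(0, \infty)$; monotonicity of $(2^x - 1)/x$ on $[1,\infty)$ gives $f'(1) \leq 0$, so on $[1,2]$ the function $f$ is either identically zero or first decreasing then increasing, and either way the two boundary zeros $f(1) = f(2) = 0$ force $f \leq 0$ throughout. (The degenerate cases $r \in \{0, k\}$ are trivial.) I expect the main obstacle to be spotting the factorization $F(b^*) = 2^p[(2^k-1) + (2^c-1)(2^r-1)]$ that exposes the product structure of the inequality; once this is in hand, the reduction to $f(q) \leq 0$ is natural and the calculus conclusion is essentially forced by the two boundary zeros together with the simple sign structure of $f'$.
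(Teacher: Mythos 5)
Your proposal is correct and follows the paper's strategy at every structural step: you observe that the objective is convex, reduce to vertices of the polytope, identify the three-block vertex form (paper's $a,b,c,\ell$ are your $p,r,q,c$), factor out the common $2^p$ power, and reduce to a two-variable product inequality. In fact, the reduction is identical: your inequality $(2^c-1)(2^r-1) \leq (2^k-1)(2^{rc/k}-1)$ is exactly the paper's $(Z^\lambda-1)(Z^\beta-1) \leq (Z^{\beta\lambda}-1)(Z-1)$ after setting $Z = 2^k$, $\beta = r/k$, $\lambda = c/k$.

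Where you diverge is only in the final step. The paper substitutes $r' = 1/\lambda$, $x = z^\lambda$, $y = z^{\beta\lambda}$ and invokes monotonicity of $x \mapsto (x^{r'}-1)/(x-1)$ on $[1,\infty)$, a clean one-variable lemma. You instead substitute $q = 2^{c/k}$ to get the single-variable polynomial $f(q) = (2^r-1)q^k - (2^k-1)q^r + (2^k-2^r)$, and argue $f \leq 0$ on $[1,2]$ from the boundary zeros $f(1) = f(2) = 0$ plus the fact that $f'$ changes sign at most once (and that once from $-$ to $+$). The two are essentially equivalent — your inequality, rewritten as $\frac{q^k-1}{q^r-1} \leq \frac{2^k-1}{2^r-1}$ with the change of variable $w = q^r$, $s = k/r$, is precisely the paper's monotonicity statement — but your direct calculus argument on $f$ is a perfectly valid self-contained alternative. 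One small wording issue: you say $f$ is ``either identically zero or first decreasing then increasing,'' but it could also be monotone on $[1,2]$ (if the unique sign change of $f'$ falls outside the interval); the conclusion $f \leq 0$ still holds in that case because both endpoints vanish, so the argument is sound as long as you phrase it as ``at most one sign change, from $-$ to $+$, hence no interior local max.''
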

We prove~\cref{thm:max} in the appendix and now use the previous lemma to prove a corollary about
the dimension of the span of several random rank 1 tensors.

\begin{corollary}
\label{cor:onebyone}
Let $d, k, t$ be integers.
For each $i \in [t]$ and $j \in [d]$, pick $x^{(i,j)} \in \F_2^k$ uniformly at random.
For $i \in [t]$, let $T_i$ be the rank-$1$ tensor $\otimes_{j=1}^d x^{(i,j)}$.
Then, for every $ 0 \le r \le t$,
$$\Pr[\mathrm{dim}(\mathrm{span}(\{T_1, \ldots, T_t\})) = r] \le {t \choose r} \left( \frac{d + 2^{t/k^{d-1}}}{2^k} \right)^{t-r}.$$ 
\end{corollary}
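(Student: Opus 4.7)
The plan is to prove the bound by a union-bound over the possible ``pivot sets'' combined with a chain-rule argument that invokes \cref{lem:randrankone} on each non-pivot coordinate. For a given realization of $T_1,\ldots,T_t$, let $S \subseteq [t]$ be the pivot set, i.e., the set of indices $i$ with $T_i \notin \mathrm{span}\{T_j : j < i\}$; by the standard greedy construction of a basis, $|S| = \dim(\mathrm{span}\{T_1,\ldots,T_t\})$. Thus the event $\{\dim(\mathrm{span}\{T_1,\ldots,T_t\}) = r\}$ is contained in $\bigcup_{|S|=r} E_S$, where
\[ E_S := \bigl\{ T_i \in \mathrm{span}\{T_j : j < i\} \text{ for every } i \in [t] \setminus S \bigr\}, \]
and there are $\binom{t}{r}$ such subsets $S$.

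Fix $S \subseteq [t]$ with $|S| = r$. For each $i \in [t]$, let $U_i := \mathrm{span}\{T_j : j < i\}$; note that $U_i$ is determined by $T_1,\ldots,T_{i-1}$ and $\dim(U_i) \leq i - 1 \leq t - 1$. The event $E_S$ is precisely $\bigcap_{i \notin S}\{T_i \in U_i\}$. Since $T_i$ is drawn independently of $T_1,\ldots,T_{i-1}$, \cref{lem:randrankone} applied to the (conditionally fixed) subspace $U_i$ gives
\[ \Pr[T_i \in U_i \mid T_1,\ldots,T_{i-1}] \leq \frac{d}{2^k} + \frac{2^{\dim(U_i)/k^{d-1}}}{2^k} \leq \frac{d + 2^{t/k^{d-1}}}{2^k}. \]
Multiplying these conditional probabilities over $i \in [t]\setminus S$ via the chain rule (iteratively taking conditional expectations from $i=t$ down to $i=1$, and noting that the factor is $1$ whenever $i \in S$) yields $\Pr[E_S] \leq \bigl( (d + 2^{t/k^{d-1}})/2^k \bigr)^{t-r}$. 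A union bound over the $\binom{t}{r}$ choices of $S$ completes the proof.

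The argument is essentially immediate given \cref{lem:randrankone}; the only subtlety worth verifying is that the chain rule bound applies even though the subspaces $U_i$ are themselves random. This is fine because $U_i$ is measurable with respect to the $\sigma$-algebra generated by $T_1,\ldots,T_{i-1}$, while $T_i$ is independent of these, so the conditional probability bound from \cref{lem:randrankone}, which holds for any fixed subspace of dimension at most $t$, can be applied verbatim after conditioning on the past. Thus the main ``content'' of the corollary is already packed into \cref{lem:randrankone}, and there is no serious obstacle beyond a careful bookkeeping of the dimensions $\dim(U_i) \le t-1 < t$.
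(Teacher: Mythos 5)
Your proof is correct and follows essentially the same approach as the paper: both reduce to a union bound over $\binom{t}{r}$ index sets (you track the pivot set $S$ of size $r$, the paper tracks its complement $I$ of size $t-r$, but these parametrize the same events), then apply the chain rule together with \cref{lem:randrankone} and the bound $\dim(U_i) \le t$. The only difference is cosmetic indexing; the underlying argument, including the observation that the conditional application of \cref{lem:randrankone} is legitimate because $U_i$ is determined by $T_1,\ldots,T_{i-1}$, is the same.
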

\begin{proof}
Let us reveal $T_1, \ldots, T_t$ one at a time.
For $0 \leq i \leq t$, let $V_i = \mathrm{span}(\{T_1, \ldots, T_{i-1}, T_i \})$.
Thus we have $0 = \dim(V_0) \leq \dim(V_1) \leq \ldots \dim(V_t)$.
We want to estimate the probability that $\dim(V_t) = r$.
Let $E_i$ denote the event that $T_i \in V_{i-1}$. For $I \subseteq [t]$, let $E_I$ denote the event $\bigcap_{i \in I} E_i$. In terms of these events, we can bound $\Pr[ \dim (V_t) = r] $ as follows.
\begin{align*} \Pr[ \dim (V_t) = r]  &\leq  \Pr[ \exists I \subseteq [t], |I| = t-r \mbox{ such that } E_I \mbox{ occurs} ]\\
& \leq \sum_{I \subseteq [t], |I| = t-r } \Pr[E_I].
\end{align*}
We conclude the proof by bounding $\Pr[E_I]$. Fix $I \subseteq [t]$ with $|I| = t-r$.
Let $I = \{i_1, \ldots, i_{t-r} \}$ with $i_1 < i_2 < \ldots < i_{t-r}$.
$$\Pr[E_I] = \prod_{j=1}^{t-r} \Pr[ E_{i_{j}}| \bigcap_{\ell < j} E_{i_{\ell}} ].$$
\cref{lem:randrankone} implies the following.
$$ \Pr[E_{i} | T_1, \ldots, T_{i-1} ] \leq \frac{d + 2^{\dim(V_{i-1})/k^{d-1}}}{2^k}.$$
For any given $j \in [t-r]$,
the events $E_{i_1}, \ldots, E_{i_{j-1}}$ are all determined by 
$T_1, \ldots, T_{i_j-1}$ (since $E_{i_\ell}$ depends on $T_1, \ldots, T_{i_\ell}$, and
$i_{j-1} \leq i_j - 1$).
Thus, for each $j \in [t-r]$, we have,
$$ \Pr[E_{i_j} |  \bigcap_{\ell < j} E_{i_{\ell}}   ] \leq \frac{d + 2^{t/k^{d-1}}}{2^k}.$$
Here we used the fact that $\dim(V_{i_j-1}) \leq t$.
Using this in our previous bound, we conclude that
$$\Pr[E_I] \leq \left(\frac{d + 2^{t/k^{d-1}}}{2^k}\right)^{t-r},$$
and thus,
\begin{equation*}\Pr[\dim(V_t) =r ] \leq {t \choose r} \cdot
  \left(\frac{d + 2^{t/k^{d-1}}}{2^k}\right)^{t-r}.\qedhere
\end{equation*}
\end{proof}

\subsection{Proof of \cref{thm:tensorsumbound}}
We now use \cref{cor:onebyone} to prove \cref{thm:tensorsumbound}.
\begin{proof}[Proof of \cref{thm:tensorsumbound}]
The equation
\begin{equation}
\label{eq:tensorzero}
 \sum_{i = 1}^t \bigotimes_{j =1}^d x^{(i,j)} = 0
\end{equation}
implies that
\begin{align}
\label{eq:tensorflatzero}
\forall \ell \in [k],\ \sum_{i = 1}^t x^{(i,1)}_\ell \cdot \bigotimes_{j =2}^d x^{(i,j)} = 0.
\end{align}
Let $T_i$ denote $\bigotimes_{j=2}^d x^{(i,j)}$ for $i \in [t]$ and $\mathcal{T} = \mathrm{span}(\{T_1,\ldots, T_t\})$. Then we have,
\begin{align}
\label{eq:tensordim}
\Pr[ \{x^{(i,j)}\}_{i \in [t], j \in [d]}\mbox{ satisfy~\eqref{eq:tensorzero}}] &\le \Pr[ \{x^{(i,j)}\}_{i \in [t], j \in [d]}\mbox{ satisfy~\eqref{eq:tensorflatzero}}]\nonumber\\
&= \sum_{r = 0}^t \Pr\left[ \{x^{(i,j)}\}_{i \in [t], j \in [d]}\mbox{ satisfy~\eqref{eq:tensorflatzero}} \big| \mathrm{dim}(\mathcal{T}) = r \right] \Pr\left[ \mathrm{dim}(\mathcal{T}) = r \right]\nonumber\\
&= \sum_{r=0}^t \left(\prod_{\ell \in [k]} \Pr\left[\sum_{i = 1}^t x^{(i,1)}_\ell\cdot T_i = 0\big| \mathrm{dim}(\mathcal{T}) = r \right]\right)\cdot \Pr\left[ \mathrm{dim}(\mathcal{T}) = r \right] \nonumber\\
&\le \sum_{r=0}^t \left(\frac{1}{2^r}\right)^k \cdot \Pr\left[ \mathrm{dim}(\mathcal{T}) = r \right].
\end{align}
Here, the equality in the third step follows from the fact that $\{x^{(i,1)}_\ell\}_{i \in [t], \ell \in [k]}$ are independently and uniformly distributed in $\F_2$.\\
By the given distribution of $T_1, \ldots, T_t$ in $(\F_2^k)^{\otimes (d-1)}$, \cref{cor:onebyone} says that
$$\Pr\left[ \mathrm{dim}(\mathcal{T}) = r \right] \le {t \choose r} \left(\frac{d-1 + 2^{\frac{t}{k^{d-2}}}}{2^k}\right)^{t-r}.$$
Plugging this bound back into \eqref{eq:tensordim} gives
\begin{align*}
\Pr[ \{x^{(i,j)}\}_{i \in [t], j \in [d]}\mbox{ satisfy~\eqref{eq:tensorzero}}] &\le \sum_{r=0}^t {t \choose r} \frac{1}{2^{rk}} \left(\frac{d-1 + 2^{\frac{t}{k^{d-2}}}}{2^k}\right)^{t-r} \nonumber \\
&\le \sum_{r=0}^t {t \choose r} \left(\frac{1}{2^k}\right)^r \left(\frac{d-1 + 2^{\frac{t}{k^{d-2}}}}{2^k}\right)^{t-r}\nonumber\\
&= \left(\frac{1}{2^k} + \frac{d-1 + 2^{\frac{t}{k^{d-2}}}}{2^k} \right)^t \nonumber\\
&\le \left( \frac{d + 2^{\frac{t}{k^{d-2}}}}{2^k}\right)^t .\\
\end{align*}
Now, since $d <  2^{\epsilon k/5}$ and $t < \epsilon k^{d-1}/5$, 
we have
$$ d + 2^{\frac{t}{k^{d-2}}} < 2 \cdot  2^{\epsilon k/5} < 2^{\epsilon k /2},$$
we conclude that 
$$\Pr[\sum_{i = 1}^t \bigotimes_{j =1}^d x^{(i,j)} = 0] <  2^{-(1-\epsilon/2) kt}.$$
This completes the proof. 
\end{proof} 
\subsection{Explicit $d$-linear forms with small correlations with $d-1$-linear forms}
In this section, we dwell a bit on the question of constructing explicit $d$-linear forms which have small correlation with lower degree multilinear polynomials. In particular, we present an explicit $d$-linear form that has exponentially small correlation with any lower degree multilinear form. Define $f:(\F_2^k)^d \to \F_2$ as 
$$f(x_1,...,x_d)= \langle x_1\cdot x_2\cdots x_{d-1}, x_d\rangle,$$
where $\cdot $ denotes multiplication over the bigger field
$\F_{2^k}$. It is easy to see that $f$ is a $d$-linear form. 

Ideally, we would like to show that the map $f$ defined above has small correlation with \emph{any} polynomial of degree at most $d-1$. But, we do not know how to show this. In the rest of this section, we show that $f$ has small correlation with any polynomial of degree $d-1$ which respects the partition of the inputs to $f$. We now prove the following lemma.
\begin{lemma}
The function $f=\langle x_1\cdot x_2\cdots x_{d-1}, x_d\rangle$ has correlation at most $(d-1)2^{-k}$ with any degree $\leq d-1$ multilinear form. 
\end{lemma}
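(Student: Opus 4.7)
The plan is to prove the lemma by induction on $d$, where each inductive step reduces the correlation bound for $f$ to a probability estimate and then invokes the inductive hypothesis through Fourier inversion over $\F_2^k$. The base case $d=2$ is a direct calculation: averaging $(-1)^{\langle x_1, x_2\rangle + g(x_1, x_2)}$ over $x_2$ forces $x_1$ to equal a fixed vector, giving correlation exactly $2^{-k}$ with any set-multilinear polynomial of degree $\leq 1$. For the inductive step, fix a set-multilinear polynomial $g$ of degree $\leq d-1$ respecting the partition and split $g = g_A(x_1,\ldots,x_{d-1}) + \langle x_d, H(x_1,\ldots,x_{d-1})\rangle$, where the second term collects the monomials of $g$ that use block $x_d$. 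Since each such monomial uses at most $d-2$ blocks besides $x_d$, $H \colon (\F_2^k)^{d-1} \to \F_2^k$ is a vector of $k$ set-multilinear polynomials of degree $\leq d-2$. Averaging over $x_d$ first yields $|\E_x[(-1)^{f+g}]| \leq \Pr_{x_1,\ldots,x_{d-1}}[x_1 \cdots x_{d-1} = H(x_1, \ldots, x_{d-1})]$ (product in $\F_{2^k}$), so it suffices to bound this probability by $(d-1)\cdot 2^{-k}$.

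Fourier-inverting the indicator over $\F_2^k$, one obtains
\[
\Pr[x_1\cdots x_{d-1} = H] = \frac{1}{2^k} + \frac{1}{2^k}\sum_{y\in\F_{2^k}\setminus\{0\}} \E_{x_1,\ldots,x_{d-1}}\left[(-1)^{\tr(y\,x_1\cdots x_{d-1}) + \langle y, H\rangle}\right].
\]
For each nonzero $y$, I would substitute $x_1 \mapsto y^{-1}x_1$, a measure-preserving $\F_2$-linear bijection on the block $\F_2^k \cong \F_{2^k}$. The trace term becomes $\tr(x_1 x_2 \cdots x_{d-1}) = \langle x_1\cdots x_{d-2}, x_{d-1}\rangle$, which is exactly the $(d-1)$-linear form of the shape treated by the lemma for $d-1$; meanwhile $\langle y, H(y^{-1}x_1, x_2,\ldots, x_{d-1})\rangle$ is an $\F_2$-valued set-multilinear polynomial of degree $\leq d-2$ in the new variables. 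Therefore the inner expectation equals, up to sign, the correlation of the $(d-1)$-linear form $\langle x_1\cdots x_{d-2}, x_{d-1}\rangle$ with a set-multilinear polynomial of degree $\leq d-2$, which by the inductive hypothesis is at most $(d-2)\cdot 2^{-k}$. Summing over the $2^k - 1$ nonzero $y$ yields $\Pr \leq 2^{-k} + (1 - 2^{-k})(d-2)\cdot 2^{-k} \leq (d-1)\cdot 2^{-k}$, closing the induction.

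The main subtlety is choosing an inductive hypothesis robust enough to absorb the $y$-dependent residual term: the correlation bound for $d-1$ must cover \emph{every} set-multilinear polynomial of degree $\leq d-2$, so that no matter what $H$ and $y$ are, the transformed term $\langle y, H(y^{-1}x_1, \ldots)\rangle$ remains within the inductively handled class. The algebraic facts making this work are that (i) multiplication by any nonzero element of $\F_{2^k}$ is an $\F_2$-linear bijection on $\F_2^k$, which absorbs the Fourier frequency $y$ into the field product $x_1 \cdots x_{d-1}$, and (ii) an invertible linear change of variables within one block preserves set-multilinearity and total degree with respect to the given partition.
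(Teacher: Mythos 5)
Your proof is correct, but it follows a genuinely different route from the paper's. Both proofs begin identically by averaging out the last block $x_d$, reducing the task to bounding $\Pr_{x_1,\ldots,x_{d-1}}[\,x_1\cdots x_{d-1} = H\,]$ where $H$ collects the $x_d$-dependent monomials of $g$. From there the approaches diverge. The paper proceeds one block at a time: because the expression $x_1\cdots x_{d-1} - \sum_{i<j} v_i$ is an $\F_2$-linear function of $x_j$ once the other blocks are fixed, the elementary fact $\Pr[h(x)=a] \le \Pr[h(x)=0]$ for linear $h$ lets one replace $H$ by $\overline{0}$ in $d-1$ successive steps, landing directly on $\Pr[x_1\cdots x_{d-1}=\overline 0] = 1 - (1-2^{-k})^{d-1}$. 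Your proof instead Fourier-inverts the indicator, then for each nonzero frequency $y$ applies the measure-preserving substitution $x_1 \mapsto y^{-1}x_1$ to absorb $y$ into the field product, so that the residual expectation is exactly the correlation of the $(d-1)$-block trace form with a set-multilinear polynomial of degree $\le d-2$, handled by induction. Unwinding the recursion $B_d \le 2^{-k} + (1-2^{-k})B_{d-1}$ yields the same $1-(1-2^{-k})^{d-1}$ bound. The paper's argument is shorter and avoids characters entirely; yours has an appealing self-similar structure in which the same hard tensor recurs at every scale of the induction, and it shows transparently why the inductive hypothesis must be stated for the whole class of lower-degree set-multilinear polynomials rather than a single target. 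One small point worth making explicit in your write-up: you tacitly switch between the coordinate pairing $\langle\cdot,\cdot\rangle$ used to define $f$ and the trace pairing $\tr(\cdot\,\cdot)$ used in the Fourier inversion; these differ by an invertible $\F_2$-linear change of variables in a single block, which preserves both correlations and the class of set-multilinear polynomials of a given degree, so the argument is unharmed, but the identification should be stated.
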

\begin{proof}
Let $g$ be a $(d-1)$-linear form over $(\F_2^k)^d$. A similar proof as below works for any $d'$-linear form $g'$ for $d'< d-1$ also. We want to understand
$$
\corr(f,g)= \bias(f-g).
$$
Since $g$ is a $(d-1)$-form, it is of the form
$g(x_1,...,x_d)=\sum_{i=1}^d g_i(x_{[d]\setminus \{i\}})$. Since 
$g_i$ is a $(d-1)$-linear form in the variables
$x_{[d]\setminus\{i\}}$, for each $i \in [d-1]$ there exists an
$\F_2^k$-valued linear form $v_i = v_i(x_{[d-1]\setminus \{i\}})$ such
that $g_i(x_{[d] \setminus \{i\}}) = \langle v_i(x_{[d-1]\setminus
  \{i\}}), x_d\rangle = \langle v_i,x_d\rangle$. In particular
\begin{equation}\label{rewritingthebias}
\corr(f,g)= \bias(f-g)= \bias\left((f-\sum_{i=1}^{d-1} g_i) -
  g_d\right) =  \Pr_{x_1,...x_{d-1}} \left[x_1\cdot x_2\cdots x_{d-1}-\sum_{i=1}^{d-1} v_i=\overline{0}\right].
\end{equation}
This is because $g_d$ does not depend on $x_d$ and for any fixing of
$x_1,...,x_{d-1}\in \F_2^k$, $f-g$ is an affine form in the variable
$x_d$ that is biased if $x_1\cdot x_2\cdots x_{d-1}-\sum_{i=1}^{d-1}
v_i=\overline{0}$ (in which case the bias is 1) and is otherwise an
unbiased function. We will prove 
\begin{equation}\label{repeatedfact}
\Pr_{x_1,...x_{d-1}\in \F_2^k} \left[x_1\cdot x_2\cdot \cdots
  x_{d-1}-\sum_{i=1}^{d-1} g_i=\overline{0}\right]\leq
\Pr_{x_1,...,x_{d-1}\in \F_2^k}[x_1\cdot x_2\cdot \cdots x_{d-1}=\overline{0}],
\end{equation}
by repeatedly applying the following fact. 
\begin{fact}\label{linearmapfact}
Let $h:\F_2^k\to \F_2^k$ be a linear map. Then for every $\overline{a}\in \F_2^k$, 
$$
\Pr_{x\in \F_2^k} [h(x)=\overline{a}] \leq \Pr_{x\in \F_2^k} [h(x)=\overline{0}]. 
$$	
\end{fact}

Note that applying this fact we have
\begin{align*}
\Pr_{x_1,...x_{d-1}} \left[x_1\cdot x_2\cdots x_{d-1}-\sum_{i=1}^{d-1}
  v_i=\overline{0}\right]&= \Pr_{x_1,...x_{d-1}} \left[x_1\cdot
                           x_2\cdots x_{d-1}-\sum_{i=1}^{d-2}
                           v_i=v_{d-1}\right]\\ &\leq
                                                  \Pr_{x_1,...x_{d-1}}
                                                  \left[x_1\cdot x_2\cdots x_{d-1}-\sum_{i=1}^{d-2} v_i = \overline{0}\right],
\end{align*}
since for every fixing of $x_1,...,x_{d-2}\in \F_2^k$, $x_1\cdot
x_2\cdots x_{d-1}-\sum_{i=1}^{d-2} v_i$ is a $\F_2^k$-linear form over
$x_{d-1}$. Applying \cref{linearmapfact} similarly for coordinates
$i=1,...,d-2$, we get \cref{repeatedfact}. Finally, by a simple union
bound we can bound $\Pr_{x_1,...,x_{d-1}\in \F_2^k}[x_1\cdot x_2\cdots
x_{d-1}=\overline{0}] = 1 - (1-2^{-k})^{d-1}\leq (d-1)\cdot 2^{-k}$. Combining this with \cref{repeatedfact} and \cref{rewritingthebias} finishes our proof. 
\end{proof}

\section{High-rank tensors from unbiased polynomials}

It is well-known that the bias of a bilinear form corresponding to a matrix $M \in
\F_2^{k \times k}$ is tightly related to its rank $\rank(M)$ (more
precisely, $\bias(M) = 2^{-\rank(M)}$). In this section, we explore a similar
connection for higher dimensional tensors. We then use this to
(re)prove some existing tensor rank lower bounds (e.g., for the trace tensor
and the matrix multiplication tensor)

\subsection{Small Bias implies large tensor rank}

We begin with the main theorem of this section which shows tensors with
small bias have large rank. 
\begin{theorem}[Small bias implies large rank]\label{thm:lowrankbias-gen}
Let $P\in \F_2^{k\times k\cdots \times k}$ be any $d$-dimensional tensor of rank $\leq t$. Then  $$\bias(P)\geq \left(1-\frac{2}{2^d} \right)^t.$$  	
\end{theorem}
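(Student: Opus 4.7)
My plan is to exploit multilinearity in one argument to turn $\bias(P)$ into a coordinate-vanishing probability, and then prove that probability bound by induction on $d$ whose only non-trivial input is a one-line Fourier inequality for weight enumerators of $\F_2$-subspaces. Fix a rank-$t$ decomposition
\[
P(x_1,\dots,x_d) \;=\; \sum_{i=1}^{t}\prod_{j=1}^{d}\langle u_i^{(j)}, x_j\rangle.
\]
For any fixed $(x_2,\dots,x_d)$, $P$ is the linear form $\langle v, x_1\rangle$ in $x_1$, where $v := \sum_{i=1}^{t} c_i\, u_i^{(1)}$ and $c_i := \prod_{j=2}^{d}\langle u_i^{(j)}, x_j\rangle$. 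Averaging $(-1)^{\langle v, x_1\rangle}$ over $x_1$ gives $\mathbbm{1}[v=0]$, so
\[
\E_x\bigl[(-1)^{P(x)}\bigr] \;=\; \Pr_{x_2,\dots,x_d}\!\bigl[\textstyle\sum_i c_i\, u_i^{(1)} = 0\bigr] \;\ge\; \Pr_{x_2,\dots,x_d}\!\bigl[\forall i\colon c_i = 0\bigr].
\]
This quantity is already non-negative, hence equals $\bias(P)$; so it suffices to show $\Pr\bigl[c_i=0\text{ for all }i\bigr] \ge (1-2^{1-d})^t$.

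Let $A_i := \{x_{\ge 2} : \prod_{j=2}^{d}\langle u_i^{(j)}, x_j\rangle = 0\}$. I would prove $\Pr[\bigcap_i A_i]\ge (1-2^{1-d})^t$ by induction on $d$. The base case $d=2$ is a dimension count: $\bigcap_i A_i$ is the orthogonal complement of $\mathrm{span}\{u_i^{(2)} : i\in[t]\}$, which has codimension at most $t$, so its probability is $\ge 2^{-t}$. For the inductive step, condition on $x_d$ and set $S(x_d) := \{i : \langle u_i^{(d)}, x_d\rangle = 1\}$: constraints with $i\notin S$ are vacuous, while those with $i\in S$ reduce to the $(d-1)$-variable statement on index set $S$. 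The inductive hypothesis gives conditional probability at least $\alpha^{|S(x_d)|}$ with $\alpha := 1-2^{2-d}$. Since the vector $(\langle u_i^{(d)}, x_d\rangle)_{i\in[t]}\in\F_2^t$ is uniform over a linear subspace $L\subseteq \F_2^t$, the induction closes once I establish
\[
\E_{y\in L}\bigl[\alpha^{|y|}\bigr] \;\ge\; \left(\frac{1+\alpha}{2}\right)^{\!t} \quad\text{for every subspace } L\subseteq \F_2^t \text{ and every } \alpha\in[0,1],
\]
where $|y|$ is Hamming weight; then $(1+\alpha)/2 = 1-2^{1-d}$ delivers the bound.

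The subspace inequality has a quick Fourier proof: using $\alpha^{y_i} = \tfrac{1+\alpha}{2} + \tfrac{1-\alpha}{2}(-1)^{y_i}$ and multiplying out,
\[
\E_{y\in L}\bigl[\alpha^{|y|}\bigr] \;=\; \sum_{S\subseteq [t]}\left(\tfrac{1+\alpha}{2}\right)^{\!t-|S|}\!\left(\tfrac{1-\alpha}{2}\right)^{\!|S|} \E_{y\in L}\bigl[(-1)^{\sum_{i\in S}y_i}\bigr].
\]
Each character expectation equals $\mathbbm{1}[e_S\in L^\perp]\in\{0,1\}$, so every term is non-negative and the $S=\emptyset$ term alone delivers $((1+\alpha)/2)^t$. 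The main obstacle is identifying this intermediate statement in the first place: the $A_i$'s are unions of codimension-$1$ subgroups, so a direct ``positive correlation'' attack via inclusion-exclusion looks unwieldy. Phrasing the inductive step through $\E_{y\in L}[\alpha^{|y|}]$ reduces the problem to this one-line character computation, with the key observation that $0\in L^\perp$ always guarantees the desired lower bound for free.
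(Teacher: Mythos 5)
Your proof is correct, and while it shares the paper's outer reduction, the inductive engine is genuinely different. Both you and the paper reduce $\bias(P)$ to the probability $\Pr_{x_2,\dots,x_d}\bigl[\forall i:\ \prod_{j\ge 2}\langle u_i^{(j)},x_j\rangle = 0\bigr]$ by integrating out $x_1$ and dropping to the sub-event where every summand vanishes, and both then prove a lower bound of $(1-2^{1-d})^t$ on this probability by induction on $d$. But the paper's induction (its Lemma~\ref{lem:low rank bias}) first Fourier-expands the product of indicators, $\prod_i \frac{1+(-1)^{M_i}}{2} = \E_{S\subseteq[t]}[(-1)^{M_S}]$ over a \emph{uniformly random} subset $S$, then integrates out one coordinate to turn each term $\E_x[(-1)^{M_S}]$ into a probability, lower-bounds it by the conditional sub-event $\Pr[\forall j\in S,\ M_j'=0]$, applies the induction hypothesis, and finishes with the trivial identity $\E_{S\subseteq[t]}[\alpha^{|S|}]=\bigl(\frac{1+\alpha}{2}\bigr)^t$. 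You instead condition directly on the last variable $x_d$, which makes a \emph{subspace-distributed} set of constraints $S(x_d)$ active, apply the induction hypothesis to those, and close the gap with the inequality $\E_{y\in L}[\alpha^{|y|}] \ge \bigl(\frac{1+\alpha}{2}\bigr)^t$ valid for every subspace $L\le \F_2^t$ and $\alpha\in[0,1]$, proved by noting that the character sums $\E_{y\in L}[(-1)^{\langle e_S,y\rangle}] = \mathbbm{1}[e_S\in L^\perp]$ are all non-negative. The paper's route sidesteps the subspace case entirely by doing the expansion \emph{before} reducing dimension, so it only ever needs the exact uniform-$S$ identity; your route needs the stronger subspace weight-enumerator inequality, but in exchange the conditioning step is arguably more transparent (it is clear from the start which constraints survive), and the lemma you isolate, that the weight enumerator of any $\F_2$-subspace pointwise dominates that of the ambient space on $[0,1]$, is a clean self-contained fact of independent interest.
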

An important ingredient of our proof will be the following lemma.
\begin{lemma}\label{lem:low rank bias}
Let $d$ be a natural number.  Let $M_1, M_2, \ldots, M_t \in \F_2^{k \times k \cdots \times k}$ be $d$-dimensional tensors of rank at most $1$. Then,
\begin{equation}
\Pr_{x_1, x_2, \ldots, x_d \in \F_2^k}\left[ \forall i\in [t], M_i (x_1, x_2, \ldots, x_d) =0\right] \geq \left(1-\frac{1}{2^d} \right)^t. 
\end{equation}
\end{lemma}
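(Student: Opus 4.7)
The plan is to prove by induction on $d$ the strengthened inequality
\[
\E_{x_1, \ldots, x_d}\Bigl[\prod_{i=1}^t \bigl(1 - \lambda \prod_{j=1}^d \ip{u_{i,j}, x_j}\bigr)\Bigr] \geq \Bigl(1 - \frac{\lambda}{2^d}\Bigr)^t,
\]
which should hold for every $\lambda \in [0, 1]$, every $t \geq 0$, and arbitrary vectors $u_{i,j} \in \F_2^k$ specifying the rank-$1$ tensors $M_i(x_1,\ldots,x_d) = \prod_j \ip{u_{i,j}, x_j}$. Specializing to $\lambda = 1$ recovers the lemma, since each factor $1 - M_i(x) \in \{0, 1\}$ then equals $\mathbbm{1}[M_i(x) = 0]$ and the expectation collapses to $\Pr[\forall i,\ M_i(x) = 0]$.

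\textbf{Base case $d = 1$.} I would use the Fourier identity $\ip{u, x} = (1 - \chi_u(x))/2$ with $\chi_v(x) = (-1)^{\ip{v, x}}$ to rewrite $1 - \lambda\ip{u_i, x} = a + b\,\chi_{u_i}(x)$, where $a = 1 - \lambda/2 \geq 0$ and $b = \lambda/2 \geq 0$. Expanding the product over $i$ and using $\E_x[\chi_v(x)] = \mathbbm{1}[v = 0]$ would then yield
\[
\E_x\prod_i (1 - \lambda\ip{u_i, x}) = \sum_{\substack{S \subseteq [t] \\ \sum_{i \in S} u_i = 0}} a^{t - |S|} b^{|S|} \;\geq\; a^t = (1 - \lambda/2)^t,
\]
since the $S = \emptyset$ term by itself contributes $a^t$ and every other term is non-negative.

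\textbf{Inductive step.} I would condition on $x_1$ and let $S(x_1) = \{i : \ip{u_{i,1}, x_1} = 1\}$. For $i \notin S(x_1)$ the $i$-th factor reduces to $1$, while for $i \in S(x_1)$ it equals $1 - \lambda\prod_{j \geq 2} \ip{u_{i,j}, x_j}$. Applying the induction hypothesis to the inner expectation over $x_2, \ldots, x_d$ (same $\lambda$, one fewer coordinate, $|S(x_1)|$ tensors) gives the lower bound $(1 - \lambda/2^{d-1})^{|S(x_1)|}$. Setting $\mu := \lambda/2^{d-1} \in [0,1]$, the identity $(1 - \mu)^{|S(x_1)|} = \prod_{i=1}^t (1 - \mu\ip{u_{i,1}, x_1})$ holds because $\ip{u_{i,1}, x_1} \in \{0, 1\}$, and this reduces the outer expectation over $x_1$ to an instance of the $d = 1$ case with parameter $\mu$. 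The base case then gives $\E_{x_1}[(1 - \mu)^{|S(x_1)|}] \geq (1 - \mu/2)^t = (1 - \lambda/2^d)^t$, closing the induction.

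\textbf{Main obstacle.} The main conceptual step is identifying the right parameter-augmented statement to induct on: the $\lambda = 1$ version does not self-reduce, because after conditioning on $x_1$ the outer expectation is not of the original form. Introducing $\lambda$ and tracking how it contracts by a factor of $1/2$ per coordinate removed is precisely what makes the inner application of the induction hypothesis (smaller $d$, same $\lambda$) and the outer application of the base case (with parameter $\mu = \lambda/2^{d-1}$) mesh together. The remaining work is routine algebraic bookkeeping.
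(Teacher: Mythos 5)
Your proof is correct, and it takes a genuinely different route from the paper's, even though both are inductions on $d$. The paper first rewrites $\Pr[\forall i,\ M_i=0]$ via the Fourier identity $\mathbbm{1}[M_i=0]=\tfrac{1+(-1)^{M_i}}{2}$, obtaining an average over uniform subsets $S\subseteq[t]$ of the quantity $\E_x[(-1)^{M_S(x)}]$ with $M_S=\sum_{i\in S}M_i$; it then lower bounds each such term by $\Pr_{x_2,\ldots,x_d}[\forall j\in S,\ M_j'(x_2,\ldots,x_d)=0]$ (where $M_i=u_i(x_1)\cdot M_i'$), applies the induction hypothesis to that $(d-1)$-dimensional event, and closes with $\E_S\bigl[c^{|S|}\bigr]=\bigl(\tfrac{1+c}{2}\bigr)^t$. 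You instead condition directly on $x_1$, note that only the random index set $S(x_1)=\{i:\ip{u_{i,1},x_1}=1\}$ survives, apply the induction hypothesis to that instance, and then fold the resulting $\E_{x_1}\bigl[(1-\mu)^{|S(x_1)|}\bigr]$ into a fresh $d=1$ problem with contracted parameter $\mu=\lambda/2^{d-1}$ --- which is exactly why the $\lambda$-strengthening is essential for your induction to close, whereas the paper's version needs no such parameter. In effect, the paper Fourier-expands over the $t$ factors at the outer level and then peels off $x_1$ inside each term, while you peel off $x_1$ at the outer level and Fourier-expand only in the base case; your version makes each step a direct appeal to the inductive statement, while the paper's subset-expansion form ($M_S$ and all) is the one that gets reused verbatim in the sharper $3.52k$ argument for the trace tensor in Section~4.2.
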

\begin{proof}
Our proof is by induction on $d$. 
\paragraph*{Base Case. }The base case when $d=1$ trivially follows
  since if there are $t$ linear forms $u_1, u_2, \ldots, u_t$ over
  $\F_2$, then the  maximum number $r$ of independent linear forms among them
  is at most $t$. We hence have, 
\begin{equation}
\Pr_{x\in \F_2^k}\left[ \forall i\in [t], u_i(x) =0\right] =
\left(1/2\right)^r \geq \left(1/2\right)^t \, . 
\end{equation}
\paragraph*{Induction Step.}
Before proving the general inductive step
  from $d-1$ to $d$, we first show the $d=2$ case as a warm up as it
  illustrates the main idea and then do the general case. 

For this case, we have $k \times k$ matrices $M_1, M_2, \ldots, M_t$ of rank one over $\F_2$, and the goal is to show that 
\begin{equation}
\Pr_{y,z\in \F_2^k}[ \forall i\in [t], \langle y, M_i z\rangle =0] \geq (3/4)^t \, . 
\end{equation}
The proof involves several steps of manipulation of the probability of interest. For a set $S\subseteq [t]$, denote by $M_S := \sum_{i\in S} M_i$. 
\begin{align*}
\Pr_{y,z\in \F_2^k} \left[ \forall i\in [t], \ip{y, M_i z} =0\right] &=
\E_{y,z\in \F_2^k} \left[ \prod_{i=1}^t \left(\frac{1+(-1)^{\ip{y, M_iz}}}{2}\right)\right]\\ &=
\E_{y,z\in \F_2^k}\left[ \frac{1}{2^t} \cdot \sum_{S\subseteq [t]} (-1)^{\ip{y, M_Sz}} \right] \\ &=
\E_{y,z\in \F_2^k} \left[\E_{S\subseteq [t]} \left[ (-1)^{\ip{y,M_Sz}}\right] \right]\\ &= 
\E_{S\subseteq [t]} \left[ \E_{y,z\in \F_2^k}\left[ (-1)^{\ip{y,M_Sz}}\right]\right]\\ &=
\E_{S\subseteq [t]}\left[ \E_{z\in \F_2^k} \left[ 1_{M_Sz=\overline{0}}\right]\right]
\\ &=
\E_{S\subseteq [t]}\left[ \Pr_{z\in \F_2^k} \left[M_Sz=\overline{0}\right]\right]\\ &=
\E_{S\subseteq [t]}\left[2^{-\rank(M_S)}\right]\\ &\geq \E_{S\subseteq [t]}\left[ 2^{-|S|}\right]=
\frac{1}{2^t} \cdot \left(1+\frac{1}{2}\right)^t = \left(\frac{3}{4}\right)^t.
\end{align*}
Now, for the general inductive step, we assume that the lemma is true up to dimension $d-1$, and prove it for $d$ dimensions.  For every $i \in [t]$, we denote by $u_i$ as the linear form in $\F_2^k$ and $M_{i}'$ as the $d-1$ dimensional tensor of rank $1$ in $\F_2^{k \times k \times k \cdots \times k}$ such that 
\[
M_i(x_1, x_2, \ldots, x_d) = u_i(x_1) \cdot M_i'(x_2, x_3, \ldots, x_d) \, .
\]
And, once again, for every $S\subseteq [t]$, $M_S$ denotes the tensor $\sum_{j \in S} M_j$, which has rank at most $|S|$. We proceed via a sequence of inequalities as in the case of $d = 2$ above.
\begin{align*}
\Pr_{x_1,x_2, \ldots, x_d\in \F_2^k} \left[ \forall i\in [t],  M_i (x_1, x_2, \ldots, x_d) =0\right] &=
\E_{x_1,x_2, \ldots, x_d\in \F_2^k} \left[ \prod_{i=1}^t \left(\frac{1+(-1)^{M_i (x_1, x_2, \ldots, x_d)}}{2}\right)\right]\\ &=
\E_{x_1,x_2, \ldots, x_d\in \F_2^k}\left[ \frac{1}{2^t} \cdot \sum_{S\subseteq [t]} (-1)^{M_S (x_1, x_2, \ldots, x_d)} \right] \\ &=
\E_{x_1,x_2, \ldots, x_d\in \F_2^k} \left[\E_{S\subseteq [t]} \left[ (-1)^{M_S (x_1, x_2, \ldots, x_d)}\right] \right]\\ &= 
\E_{S\subseteq [t]} \left[ \E_{x_1,x_2, \ldots, x_d\in \F_2^k}\left[ (-1)^{M_S (x_1, x_2, \ldots, x_d)}\right]\right]\,  .
\end{align*}
Now, observe that for every $S\subseteq [t]$, \[
\E_{x_1,x_2, \ldots, x_d\in \F_2^k}\left[ (-1)^{M_S (x_1, x_2, \ldots, x_d)}\right]\geq \Pr_{x_2, x_3, \ldots, x_d}\left[\forall j \in S, M_j'(x_2, x_3, \ldots, x_d) = 0 \right]\, .
\]
Moreover, from the induction hypothesis, we get that for all $S\subseteq [t]$,
\[\Pr_{x_2, x_3, \ldots, x_d}\left[\forall j \in S, M_j'(x_2, x_3, \ldots, x_d) = 0 \right] \geq \left(1-\frac{1}{2^{d-1}}\right)^{|S|}\, .
\]
Plugging this back in the calculations, we get 
\begin{align*}
\Pr_{x_1,x_2, \ldots, x_d\in \F_2^k} \left[ \forall i\in [t], M_i (x_1, x_2, \ldots, x_d) =0\right] &\geq 
\E_{S\subseteq [t]}\left[\left(1-\frac{1}{2^{d-1}}\right)^{|S|}\right]\\ 
&\geq \frac{1}{2^t} \cdot \left(1+1-\frac{1}{2^{d-1}}\right)^t =
  \left(1-\frac{1}{2^d}\right)^t. \qedhere
\end{align*}
\end{proof}
We now complete the proof of~\cref{thm:lowrankbias-gen}.
\begin{proof}[Proof of \cref{thm:lowrankbias-gen}]
Since $P$ has rank $\leq t$, then there is a collection of linear forms $u_1, u_2, \ldots, u_t$ and tensors $M_1, M_2, \ldots, M_t$ of rank at most $1$ in $d-1$ dimensions such that 
$$
P(X_1, X_2, \ldots, X_d)= \sum_{i=1}^t u_i(X_1) \cdot M_i(X_2, X_3, \ldots, X_d) \, . 
$$
Now, observe that  
\begin{align*}
\bias(P)&= \abs{\E_{x_1, x_2, \ldots, x_d \in \F_2^k}\left[ (-1)^{P(x_1, x_2, \ldots, x_d)}  \right]} \\
&=  \Pr_{x_2, x_3, \ldots, x_d\in \F_2^k} \left[P(X_1, x_2, x_3, \ldots, x_d)\equiv 0\right] \\ 
&\geq    \Pr_{x_2, x_3, \ldots, x_d\in \F_2^k} \left[ \forall i\in [t], \;M_i(x_2, x_3, \ldots, x_d)= 0 \right] \\
&\geq \left(1-\frac{1}{2^{d-1}}\right)^t\qquad \text{[By \cref{lem:low rank bias}]}\, . \qedhere
\end{align*}
\end{proof}

We now accompany the above theorem with an almost matching upper bound
on the bias of random high rank tensors. It is known that a random
high rank tensor has low bias. The following lemma gives a precise
quantitative version of this observation (the idea for the proof was suggested to us by Shubhangi Saraf).

\begin{lemma}\label{lem:bias ub for random tensor}
For $i \in [t]$ and $j \in [d]$, let $u_{i,j} \in \F_2^k$ be a uniformly random vector. Consider the random rank-$t$ $d$-linear form $p: (\F_2^k)^d \to \F_2$ given by
$$p (x_1,x_2, \ldots, x_d)= \sum_{i=1}^t \prod_{j = 1}^d \ip{x_j,u_{i,j}}.$$
Then 
$$
\E [\bias(p)] \leq  d\cdot 2^{-k} +  \left(1-\frac{2}{2^d}\right)^t 
$$
\end{lemma}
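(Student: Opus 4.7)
The plan is to exploit the multilinearity of $p$ to rewrite $\bias(p)$ as a non-negative probability (no absolute value needed), and then to compute the expectation over the random tensor by interchanging the order of expectations and exploiting independence of the $u_{i,1}$'s from the other vectors.

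First, I would isolate the role of $x_1$. For any fixing of $x_2,\ldots,x_d$, the map $x_1 \mapsto p(x_1,\ldots,x_d)$ is the $\F_2$-linear form $x_1 \mapsto \langle x_1, L\rangle$, where
\[
L \;=\; L(x_2,\ldots,x_d;\,u) \;:=\; \sum_{i=1}^t v_i \cdot u_{i,1},
\qquad v_i \;:=\; \prod_{j=2}^d \langle x_j, u_{i,j}\rangle \in \F_2.
\]
A linear form on $\F_2^k$ has expectation $1$ under $(-1)^{(\cdot)}$ if it is identically zero, and $0$ otherwise. Hence $\E_{x_1}[(-1)^{p(x)}] = \mathbbm{1}[L=0]$, so $\bias(p) = \Pr_{x_2,\ldots,x_d}[L = 0]$, and in particular the quantity inside the absolute value is already non-negative.

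Next I would take the expectation over $u = \{u_{i,j}\}$, swap the order, and condition on the values of $v_1,\ldots,v_t$. The key observation is that conditional on $(x_2,\ldots,x_d)$ and on $\{u_{i,j}\}_{j \ge 2}$, the vectors $u_{1,1},\ldots,u_{t,1}$ are still independent and uniform in $\F_2^k$, so $L = \sum_{i:v_i=1} u_{i,1}$ is uniform in $\F_2^k$ if at least one $v_i$ equals $1$, and identically zero otherwise. Therefore
\[
\E[\bias(p)] \;\le\; \Pr\bigl[\forall i,\,v_i = 0\bigr] \;+\; 2^{-k}.
\]

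Finally I would control $\Pr[\forall i,v_i=0]$ by conditioning on $x_2,\ldots,x_d$. With probability at most $(d-1)2^{-k}$ some $x_j$ (for $j \ge 2$) equals $\overline 0$. Otherwise, each $\langle x_j,u_{i,j}\rangle$ is a uniform independent bit in $\F_2$, and since the $u_{i,j}$ are independent across all $(i,j)$, the events $\{v_i = 0\}_{i=1}^t$ are mutually independent and each has probability $1 - 2^{-(d-1)}$. Thus
\[
\Pr[\forall i,\,v_i = 0] \;\le\; (d-1)2^{-k} \;+\; \bigl(1 - 2^{-(d-1)}\bigr)^t,
\]
and plugging back gives the claimed bound $\E[\bias(p)] \le d \cdot 2^{-k} + (1-2/2^d)^t$.

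There is no serious obstacle here; the only subtle point is recognizing that one should project out the $x_1$-direction first so as to replace the absolute value by an honest probability, after which independence of the $u_{i,1}$'s from the remaining randomness makes the computation essentially free. The factor $d\cdot 2^{-k}$ comes transparently from the two ``bad'' events: (i) some $x_j$ vanishes (contributing $(d-1)2^{-k}$), and (ii) the random linear combination $L$ happens to vanish even though some $v_i=1$ (contributing $2^{-k}$).
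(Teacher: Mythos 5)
Your proof is correct and follows essentially the same strategy as the paper's: observe that the bias of a multilinear form is a nonnegative probability (so the absolute value can be dropped), interchange the expectation over the random tensor with the expectation over the inputs, and exploit independence of the $u_{i,j}$ to reduce to the product $(1-2^{-(d-1)})^t$ plus an error term for degenerate inputs. The only cosmetic difference is that you peel off $x_1$ and then the block $\{u_{i,1}\}$ in two steps, whereas the paper conditions on all of $x_1,\ldots,x_d$ at once and factors the inner expectation over $i$ directly; the resulting bound and the underlying independence argument are identical.
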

\begin{proof}
We have
\begin{align*}
\E_{p}[\bias(p)]&= 
\E_p \E_{x_{1}, x_{2}, \ldots, x_{d}\in \F_2^k} \left[(-1)^{\sum_{i=1}^t \prod_{j = 1}^d \ip{x_j,u_{i,j}}}\right] \\
&= \E_{x_{1}, x_{2}, \ldots, x_{d}\in \F_2^k} \E_p
  \left[(-1)^{\sum_{i=1}^t \prod_{j = 1}^d \ip{x_j,u_{i,j}}}\right]\\
& = \Pr_{x_1,\dots,x_d}\left[ \exists i, x_i = \overline{0}\right]
  + \Pr_{x_1,\dots,x_d}\left[ \forall i, x_i \neq
  \overline{0}\right] \cdot \E_{{x_{1}, x_{2}, \ldots, x_{d}\in
  \F_2^k\backslash \{\overline{0}\}}}\left[\prod_{i=1}^t
  \left(\E_{u_{i,1},u_{i,2},\ldots, u_{i,d} }(-1)^{ \prod_{j
  = 1}^d \ip{x_j,u_{i, j}}}\right)\right]\\
& = 1 - \left(1-\frac1{2^k}\right)^d +\left(1-\frac{1}{2^k}\right)^d\cdot \E_{{x_{1}, x_{2}, \ldots, x_{d}\in
  \F_2^k\backslash
  \{\overline{0}\}}}\left[\prod_{i=1}^t\left(\Pr_{u_{i,1},\ldots,u_{i,d-1}}\left[\exists j \in [d-1], \langle x_j, u_{i,j}\rangle  =
  0\right]\right)\right]\\
&= 1 - \left(1-\frac1{2^k}\right)^d +\left(1-\frac{1}{2^k}\right)^d\cdot\left(1-\frac1{2^{d-1}}\right)^t\\
&\leq   d\cdot 2^{-k} +  \left(1-\frac{2}{2^d}\right)^t \, .\qedhere
\end{align*}
\end{proof}

The following special cases of~\cref{thm:lowrankbias-gen}, for $d=2$ and $d = 3$ will be useful for us, on our way to proving lower bounds on the rank of three dimensional tensors. 

\begin{corollary}\label{thm:lowrankbias-2d}
Let $P\in \F_2^{k\times k}$ be a matrix of rank $\leq t\leq k$. Then,  $\bias(P)\geq 2^{-t}$.  	
\end{corollary}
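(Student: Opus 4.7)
The plan is to derive this immediately as the $d=2$ specialization of \cref{thm:lowrankbias-gen}. Substituting $d=2$ into the general bound $\bias(P) \geq (1-2/2^d)^t$ yields $\bias(P) \geq (1-1/2)^t = 2^{-t}$, which is exactly the claimed inequality. So there is nothing extra to prove beyond invoking the general theorem with the appropriate parameter.

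Alternatively, one can give a direct one-line proof of a stronger statement (equality rather than inequality). Viewing $P \in \F_2^{k \times k}$ as a linear map, the associated bilinear form is $\langle y, Pz\rangle$, and
\begin{align*}
\bias(P) \;=\; \left|\E_{y,z \in \F_2^k} (-1)^{\langle y, Pz\rangle}\right| \;=\; \E_{z} \left[\I_{Pz = \overline{0}}\right] \;=\; \Pr_z[Pz = \overline{0}] \;=\; 2^{-\rank(P)} \;\geq\; 2^{-t},
\end{align*}
where the second equality uses the standard Fourier identity $\E_y (-1)^{\langle y, w\rangle} = \I_{w=\overline{0}}$, and the fourth equality follows because $|\ker P| = 2^{k - \rank(P)}$. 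This also recovers the exact identity $\bias(M) = 2^{-\rank(M)}$ referenced in the opening of the section.

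Either route produces the corollary in a single step; no real obstacle arises. The only subtlety worth noting is that the general induction of \cref{thm:lowrankbias-gen} is in fact tight at $d=2$ (the bound $(1-2/2^d)^t$ coincides with the exact value $2^{-t}$), whereas for $d \geq 3$ the general bound is strictly weaker than the true bias of a generic rank-$t$ $d$-linear form.
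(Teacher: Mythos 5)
Your proposal is correct and takes the same route the paper intends: the corollary is stated immediately after \cref{thm:lowrankbias-gen} precisely to record the $d=2$ specialization $(1-2/2^2)^t = 2^{-t}$, with no separate proof needed. Your alternative one-line Fourier computation is also correct and simply rederives the exact identity $\bias(M) = 2^{-\rank(M)}$ that the paper itself cites as well-known at the start of the section.
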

\begin{corollary}\label{thm:lowrankbias}
Let $P\in \F_2^{k\times k\times k}$ be a $3$-dimensional tensor of rank $\leq t$. Then,  $\bias(P)\geq \left(\frac{3}{4}\right)^t$.  	
\end{corollary}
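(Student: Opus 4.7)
The plan is immediate: this corollary is nothing but the $d=3$ specialization of \cref{thm:lowrankbias-gen}, so no new ideas are needed. I would simply substitute $d = 3$ into the lower bound $\bias(P) \geq (1 - 2/2^d)^t$, computing $1 - 2/2^3 = 1 - 1/4 = 3/4$, which yields $\bias(P) \geq (3/4)^t$ as claimed.

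Since there is no real obstacle, the only care required is to make sure the hypothesis of \cref{thm:lowrankbias-gen} is satisfied — namely, that $P$ is a $d$-dimensional tensor over $\F_2^{k \times \cdots \times k}$ of rank at most $t$. For our corollary, $P \in \F_2^{k \times k \times k}$ is exactly a $3$-dimensional tensor of rank at most $t$, so the hypothesis is met verbatim.

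If one preferred a self-contained write-up, the easier path would be to unwind the $d=3$ case of \cref{lem:low rank bias}: write $P = \sum_{i=1}^t u_i(X_1) M_i(X_2, X_3)$ with each $M_i$ a rank-$1$ bilinear form, and then use the chain of identities
\[
\bias(P) = \Pr_{x_2, x_3}[P(X_1, x_2, x_3) \equiv 0] \geq \Pr_{x_2, x_3}\bigl[ \forall i \in [t],\; M_i(x_2, x_3) = 0 \bigr] \geq (3/4)^t,
\]
where the final inequality is the $d=2$ warm-up worked out explicitly inside the proof of \cref{lem:low rank bias} (expanding the indicator as an average over $2^{-t}\sum_{S \subseteq [t]}(-1)^{\langle y, M_S z\rangle}$, swapping expectations, and using $\rank(M_S) \leq |S|$). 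But since \cref{thm:lowrankbias-gen} is already proved in the excerpt, the one-line substitution is the cleanest route and the whole argument fits in two sentences.
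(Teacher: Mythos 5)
Your proposal is correct and matches the paper exactly: the paper states this as an immediate corollary of \cref{thm:lowrankbias-gen} with no separate proof, and the $d=3$ substitution $1 - 2/2^3 = 3/4$ is all that is needed. Your optional self-contained unwinding via \cref{lem:low rank bias} is also faithful to how the general theorem is proved.
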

In the subsequent two sections, we will observe that some well-known
explicit tensors in three dimensions have very low bias, and then use
the above corollaries to conclude that these tensors have large rank. 

\subsection{A 3.52k Tensor Rank Lower Bound for $\tr(XYZ)$}\label{sec:trace}

In this section, we use the bias-vs-tensor-rank connection explored in
the previous section to construct explicit 3-dimensional tensors with large tensor
rank. \cref{thm:lowrankbias} suggests the following natural
approach to construct tensors of large rank: find a 3-linear form with
as small a bias as possible. What is the least bias of a 3-linear
form? Let $P(X,Y,Z) = \sum_{i=1}^k \langle Y,
M_iZ\rangle X_i$ be an arbitrary 3-linear form. Clearly, $\bias(P)
\ge \Pr_{y,z}[\forall i \in [k], \; \langle y, M_iz \rangle =0]
\geq \Pr_{y,z}[ y = \overline{0} \text{ or } z = \overline{0}] =
2/2^{k} -1/2^{2k}$. The $\tr(XYZ)$ is a function with bias exactly
$2/2^{k}-1/2^{2k}$ (see \cref{lem:bias of trace}). In the rest of this
section, we prove an upper bound on the bias of this function. 
To this end, we first show that the bias of
$Tr(X, Y, Z)$ is small. This will immediately via
\cref{thm:lowrankbias} give a very simple proof that $\tr(XYZ)$ tensor
has rank at least $2.409k$.  We remark that a much stronger rank
lower-bound of $3.52k$ is known due to Chudnovsky and
Chudnovsky~\cite{ChudnovskyC1988,ShparlinskiTV1992} and indeed we do a more careful
analysis of our ideas to get a new proof of the $3.52k$ lower bound (here too
the only property of $Tr$ that is used is that it is of very low bias). 
\begin{lemma}\label{lem:bias of trace}
$$
\bias(Tr(X,Y,Z)) =2\cdot 2^{-k} - 2^{-2k}. 
$$
\end{lemma}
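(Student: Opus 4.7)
The plan is to compute the bias directly by conditioning on the product $YZ$ inside the larger field $\F_{2^k}$ and using that the trace map is a nonzero $\F_2$-linear functional on $\F_{2^k}$. Concretely, for any fixed nonzero $W \in \F_{2^k}$, the map $X \mapsto \tr(X \cdot W)$ is a nonzero $\F_2$-linear form on $\F_{2^k} \cong \F_2^k$, and therefore $\E_{X \in \F_{2^k}}[(-1)^{\tr(XW)}] = 0$. For $W = 0$ this expectation is trivially $1$.

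With this observation in hand, I would write
\[
\E_{X,Y,Z \in \F_{2^k}}\bigl[(-1)^{\tr(XYZ)}\bigr]
\;=\; \E_{Y,Z}\Bigl[\E_X\bigl[(-1)^{\tr(X \cdot YZ)}\bigr]\Bigr]
\;=\; \Pr_{Y,Z}[YZ = 0]\cdot 1 \;+\; \Pr_{Y,Z}[YZ \neq 0]\cdot 0.
\]
Since $\F_{2^k}$ is a field, $YZ = 0$ happens exactly when $Y = 0$ or $Z = 0$, so by inclusion-exclusion
\[
\Pr_{Y,Z}[YZ = 0] \;=\; \Pr[Y=0] + \Pr[Z=0] - \Pr[Y=0 \text{ and } Z=0] \;=\; \frac{2}{2^k} - \frac{1}{2^{2k}}.
\]
In particular this quantity is nonnegative, so taking absolute values is immaterial and $\bias(Tr) = 2\cdot 2^{-k} - 2^{-2k}$, as claimed.

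There is no real obstacle here; the only point that requires any care is justifying that $X \mapsto \tr(XW)$ is a nonzero $\F_2$-linear map whenever $W \neq 0$. This follows because multiplication by a nonzero element is a bijection of $\F_{2^k}$ and the trace is a surjective (hence nonzero) $\F_2$-linear functional by the standard fact that $\tr \not\equiv 0$ on any finite field extension. All other steps are just the definition of bias and basic counting in $\F_{2^k}$.
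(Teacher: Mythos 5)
Your proof is correct and follows essentially the same route as the paper: condition on the product $YZ$, use that $\tr(\cdot\, W)$ is an unbiased (nonzero) linear form for $W\neq 0$ so that the inner expectation over $X$ vanishes unless $YZ=0$, and compute $\Pr[YZ=0]$ by inclusion--exclusion.
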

\begin{proof}
The trace function satisfies the simple property that for every
non-zero $\alpha \in \F_{2^k}$, the linear function $\tr(\alpha X)$ is
unbiased. Hence, 
\begin{equation*}
\bias(Tr(X,Y,Z)) = \Pr_{x,y \in \F_2^k} \left[ x\cdot y = 0
                   \right] = 2\cdot 2^{-k} - 2^{-2k}\,.\qedhere
\end{equation*}
\end{proof}

The above lemma coupled with \cref{thm:lowrankbias} immediately gives
the following lower bound on tensor rank of $Tr(X,Y,Z)$. 
\begin{corollary}\label{cor : weaker tensor rank lower bound}
$\rank(Tr(X,Y,Z)) \geq (\log_{4/3}2)\cdot k \geq 2.409k. $
\end{corollary}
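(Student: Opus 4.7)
The proof is a direct combination of the two immediately preceding results, so my plan is to simply chain them together and solve the resulting inequality for $t$.

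First I would let $t = \rank(Tr(X,Y,Z))$ and apply \cref{thm:lowrankbias} (the $d=3$ special case of \cref{thm:lowrankbias-gen}), which gives the lower bound
\[
\bias(Tr(X,Y,Z)) \geq \left(\frac{3}{4}\right)^{t}.
\]
Next I would invoke \cref{lem:bias of trace}, which supplies the upper bound $\bias(Tr(X,Y,Z)) = 2 \cdot 2^{-k} - 2^{-2k} \leq 2 \cdot 2^{-k}$. Combining the two yields
\[
\left(\frac{3}{4}\right)^{t} \leq 2 \cdot 2^{-k}.
\]

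Taking base-$2$ logarithms (noting $\log_2(3/4) < 0$, so the inequality flips) gives $t \log_2(4/3) \geq k - 1$, i.e.\
\[
t \geq \frac{k-1}{\log_2(4/3)} = (k-1)\cdot \log_{4/3} 2.
\]
Since $\log_{4/3} 2 = 1/\log_2(4/3) \geq 2.409$, we obtain $\rank(Tr(X,Y,Z)) \geq 2.409 k$ for all sufficiently large $k$ (the additive $-1$ is absorbed into the slack between $\log_{4/3} 2 \approx 2.4094$ and the stated $2.409$). There is no real obstacle here — the work has already been done in proving the bias-vs-rank bound and in computing the exact bias of the trace tensor; this corollary is just the arithmetic consequence. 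The only minor bookkeeping point is to handle the $-1$ in $(k-1)$ cleanly, which is why the stated bound is $2.409k$ rather than $\log_{4/3}(2) \cdot k$ on the nose.
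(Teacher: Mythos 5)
Your proof is correct and follows exactly the paper's intended route: combine \cref{lem:bias of trace} with the $d=3$ bias-vs-rank bound from \cref{thm:lowrankbias} and solve for $t$. Your observation about the $(k-1)$ is well taken---the chain $(3/4)^t \leq 2\cdot 2^{-k} - 2^{-2k} < 2^{-(k-1)}$ literally yields $t > (k-1)\log_{4/3}2$ rather than $t \geq k\log_{4/3}2$, so the paper's intermediate claim is slightly loose, but since $\log_{4/3}2 \approx 2.4094 > 2.409$, the stated numerical bound $2.409k$ holds once $k$ is large enough to absorb the constant.
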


We now strengthen this bound to show a $3.52 k$ lower bound on the
rank of $Tr(X, Y, Z)$. As we alluded to in earlier discussion, this
matches the best known lower bound on the tensor rank of \emph{any}
explicit tensor in three dimensions. The proof follows from a more
careful use of the ideas already present in the proof of~\cref{cor :
  weaker tensor rank lower bound}. We will need the following
well-known rate-distance MRRW tradeoff for linear codes.  
\begin{theorem}[\cite{McelieceRRW1977}]\label{thm:MRRW}
Let $S$ be a subspace of dimension at least $k$ of $\F_2^t$, such that
every non-zero vector in $S$ has weight at least $k$. Then, $t \geq
3.52k$.\footnote{The MRRW bound for binary codes states that any family
  of codes with fractional distance $\delta$ satisfies $R(\delta) \leq
  h_2\left(\frac12-\sqrt{\delta(1-\delta)}\right)$ where $h_2(x) =
  x\log_2(1/x) + (1-x)\log_2(1/1-x)$ is the binary entropy function. The above mentioned
  bound can be obtained from this (see \cite{BrownD1980} for details).}	
\end{theorem}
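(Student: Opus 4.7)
The plan is to recognize $S$ directly as a binary linear code and invoke the asymptotic MRRW linear programming bound. Since $S$ is an $\F_2$-linear subspace of $\F_2^t$, its minimum distance equals its minimum nonzero Hamming weight, so the hypotheses translate exactly into a binary linear code of length $t$, rate $R := \dim(S)/t \geq k/t$, and relative distance $\delta \geq k/t$. Writing $\rho := k/t$, the desired conclusion $t \geq 3.52\,k$ is equivalent to the single scalar upper bound $\rho \leq 1/3.52$.

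Applying the MRRW bound, in the form quoted in the footnote, gives the chain
\[
\rho \;\leq\; R \;\leq\; h_2\!\left(\tfrac{1}{2} - \sqrt{\delta(1-\delta)}\right) \;\leq\; h_2\!\left(\tfrac{1}{2} - \sqrt{\rho(1-\rho)}\right),
\]
where the last inequality uses $\delta \geq \rho$ together with the fact that $x \mapsto h_2(\tfrac{1}{2} - \sqrt{x(1-x)})$ is monotone decreasing on $[0, \tfrac{1}{2}]$ (since $\sqrt{x(1-x)}$ is increasing there and $h_2$ is increasing on $[0,\tfrac12]$). Thus $\rho$ is forced to satisfy the self-contained inequality $\rho \leq h_2(\tfrac{1}{2} - \sqrt{\rho(1-\rho)})$. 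A mild technicality here is that the MRRW bound is usually stated asymptotically; this is addressed in the standard way, either by appealing to the finite-length Delsarte LP bound (which already implies the same inequality for each $t$ up to lower-order corrections) or by observing that $S$ yields a family of codes of unbounded length with the same rate and distance parameters, to which the asymptotic statement applies directly.

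All that remains is a one-variable numerical calculation on $[0, \tfrac{1}{2}]$: the left-hand side is strictly increasing, the right-hand side strictly decreasing, so the constraint is equivalent to $\rho \leq \rho^{\star}$ for a unique crossover point $\rho^{\star}$. Direct evaluation pins down $\rho^{\star} \approx 0.2841 \approx 1/3.52$, yielding $k/t \leq \rho^{\star}$ and hence $t \geq 3.52\,k$; the explicit numerical work is carried out in~\cite{BrownD1980}. The main ``obstacle'' is not conceptual---the theorem is essentially a repackaging of MRRW plus the observation that the rate and relative-distance parameters coincide in our setting---but rather the careful bookkeeping required to extract the specific constant $3.52$ from the transcendental inequality above.
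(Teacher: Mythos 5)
The paper does not prove this theorem; it cites it as a consequence of the MRRW linear-programming bound and points to Brown--Dobkin~\cite{BrownD1980} for the derivation of the constant $3.52$. Your derivation is exactly the route the footnote indicates: view $S$ as a binary linear code with rate and relative distance both at least $\rho := k/t$, plug into $R \leq h_2(\tfrac12 - \sqrt{\delta(1-\delta)})$ using the monotonicity of the right-hand side on $[0,\tfrac12]$, and numerically solve the resulting fixed-point inequality $\rho \leq h_2(\tfrac12 - \sqrt{\rho(1-\rho)})$ to obtain $\rho \lesssim 0.284 \approx 1/3.52$. This is correct and matches the intended argument.

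One substantive objection: your second proposed fix for the asymptotic-vs-finite issue --- ``$S$ yields a family of codes of unbounded length with the same rate and distance parameters'' --- does not work as stated. The obvious candidates all fail: the repetition construction $\{(c,\ldots,c): c \in S\}$ preserves relative distance but shrinks rate, the direct sum $S \oplus S \subseteq \F_2^{2t}$ preserves rate but halves relative distance, and tensoring $S^{\otimes m}$ degrades both. So this route cannot be invoked off-the-shelf. Your first option (a finite-length form of the Delsarte LP bound, with lower-order corrections) is the correct one, and is what Brown--Dobkin actually supply. Indeed the theorem as literally stated is false for small parameters (the $[3,2,2]$ parity-check code has $t = 3 < 3.52 \cdot 2$), so it has to be read asymptotically in $k$, consistent with its use in \cref{thm : 3.5 lb on tensor rank} where $k \to \infty$.
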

 
\begin{theorem}\label{thm : 3.5 lb on tensor rank}
The rank of the tensor $Tr(X, Y, Z)$ is at least $3.52k$. 
\end{theorem}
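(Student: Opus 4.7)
The plan is to refine the proof of \cref{cor : weaker tensor rank lower bound} by applying the bias-vs-rank bound \emph{fiberwise} in the $X$ variable, rather than averaging over $x$ and losing information. Suppose for contradiction that $\rank(Tr(X,Y,Z)) \le t$, so we can write
\[
Tr(X,Y,Z) \;=\; \sum_{i=1}^{t} u_i(X)\, v_i(Y)\, w_i(Z),
\]
for linear forms $u_i, v_i, w_i$ on $\F_2^k$. To each $x \in \F_2^k$ attach the vector $c(x) := (u_1(x),\ldots,u_t(x)) \in \F_2^t$. Then $Tr(x,Y,Z) = \sum_{i : u_i(x)=1} v_i(Y) w_i(Z)$ is a bilinear form of rank at most the Hamming weight $|c(x)|$, and \cref{thm:lowrankbias-2d} gives $\bias(Tr(x,Y,Z)) \geq 2^{-|c(x)|}$.

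Next, I would pin down $\bias(Tr(x,Y,Z))$ directly for each non-zero $x \in \F_{2^k}$. Since $Y \mapsto xY$ is a bijection on $\F_{2^k}$ and $\tr$ is balanced on non-zero inputs, the same calculation as in \cref{lem:bias of trace} yields $\bias(Tr(x,Y,Z)) = 2^{-k}$ for every $x \ne 0$. Combining with the previous paragraph forces $|c(x)| \geq k$ for every non-zero $x$. Now consider the linear code $C := \{\,c(x) : x \in \F_2^k\,\} \subseteq \F_2^t$. The map $x \mapsto c(x)$ must be injective, for a non-zero $x$ in its kernel would give $|c(x)| = 0 < k$, contradicting what we just showed. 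Hence $\dim C = k$, and $C$ has minimum distance at least $k$ in ambient length $t$. Applying the MRRW bound (\cref{thm:MRRW}) yields $t \geq 3.52\,k$.

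The conceptual heart of the argument is the fiberwise application of the bias-vs-rank inequality: averaging over $x$ before invoking \cref{thm:lowrankbias-2d} only yields the weaker $\log_{4/3} 2 \approx 2.409$ factor, while preserving the fiber structure turns the problem into a coding-theoretic one that MRRW settles. There is no serious obstacle beyond noticing this reduction and citing MRRW as a black box; the only property of $Tr$ used is that the $(Y,Z)$-fiber is a non-degenerate bilinear form for every non-zero $x$, which is exactly what the $\bias(Tr) \approx 2 \cdot 2^{-k}$ computation encodes.
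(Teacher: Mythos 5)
Your proof is correct, and it arrives at the same linear code as the paper's argument: your $C = \{c(x) : x \in \F_2^k\}$ is the row space of the $k\times t$ matrix $A$ whose columns are the $X$-side linear forms, which is exactly the $K^\perp$ (with $K = \ker A$) that the paper's proof feeds into MRRW. The route, however, is genuinely cleaner. The paper expands $\Pr_{y,z}[Tr(X,y,z)\equiv 0]$ as a sum over $K$, passes by a Fourier-analytic identity to an average over $K^\perp$, and then extracts the weight constraint from the upper bound on the aggregate bias of $Tr$. You instead apply the bias-vs-rank correspondence fiberwise in $X$: for $x\neq 0$, the fiber $Tr(x,\cdot,\cdot) = \tr(xYZ)$ is the non-degenerate trace form precomposed with the bijection $Y\mapsto xY$, so it has rank exactly $k$ and bias exactly $2^{-k}$, while as a sum of $|c(x)|$ rank-one bilinear forms it has rank at most $|c(x)|$ and hence (by \cref{thm:lowrankbias-2d}) bias at least $2^{-|c(x)|}$; comparing forces $|c(x)|\geq k$. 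This avoids the Fourier computation entirely and makes transparent the only property of $Tr$ that matters, namely that every nonzero $X$-fiber is a non-degenerate bilinear form. (In fact you do not even need the bias-vs-rank inequality at this point: subadditivity of matrix rank together with $\rank\bigl(Tr(x,\cdot,\cdot)\bigr)=k$ already gives $|c(x)|\geq k$; phrasing it via bias is fine and matches the spirit of the paper.) One small clarification for a final write-up: \cref{lem:bias of trace} computes the aggregate bias $\bias(Tr) = 2\cdot 2^{-k} - 2^{-2k}$ over all of $X,Y,Z$; the fiberwise claim $\bias(Tr(x,\cdot,\cdot)) = 2^{-k}$ for $x\neq 0$ is the non-degeneracy observation underlying that lemma and should be stated and argued on its own rather than cited to it.
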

\begin{proof} Let the tensor rank of $Tr(X,Y,Z)$ be $t$. Then there
  exists $t$ vectors $a_1, a_2,\ldots, a_t \in \F_2^k$ and $t$
  rank-$1$ matrices $M_1, M_2, \ldots, M_t$ such that 
\begin{equation}\label{eq:t-tensor}
Tr(X, Y, Z) = \sum_{i=1}^t \ip{a_i, X}\cdot \ip{Y, M_iZ} \, .
\end{equation}
Let $A$ be the $k \times t$ matrix such that for every $i \in [t]$, the $i^{th}$ column of $A$ equals $a_i$. Let $K$ be the kernel of $A$. Clearly, $\dim(K) \geq t-k$.  In fact, $\dim(K) = t-k$. To see this, observe that if $\dim(K) \geq t-k + 1$, then by the rank-nullity theorem, $\rank(A) \leq k-1$. Thus, there is a non-zero $x \in F_2^k$ denoted by $x_0$ such that for every $i \in [t]$, $\ip{a_i, x_0} = 0$. Thus, $Tr(x_0, Y, Z) \equiv 0$ for a non-zero $x_0$, which is a contradiction. 

From proof of~\cref{thm:lowrankbias}, we know that 
\[
\bias(Tr(X, Y, Z)) = {\Pr_{y, z \in \F_2^k} [Tr(X, y, z) = 0]} \, .
\]
So far we were proving a lower bound on ${\Pr_{y, z \in \F_2^k} [Tr(X, y, z) = 0]}$ by proving a lower bound on $\Pr_{y, z \in \F_2^k}\left[ \forall i \in [t], \ip{y, M_iz} = 0\right]$. Clearly, this seems to be somewhat lossy since even for a choice of $y $ and $z$ in $\F_2^k$ such that $\ip{y, M_iz} \neq 0$ for some $i \in [t]$, it is conceivable that $Tr(X, y, z)$ is identically zero. For this proof, we try to be a bit more careful about this. Note that for every $u \in K \subset \F_2^t$, 
\[
\sum_{i = 1}^t u_i\cdot \ip{a_i, X} \equiv 0 \, .
\]
Thus, we have,
\begin{align*}
\Pr_{y, z \in \F_2^k} [Tr(X, y, z) = 0] &=\sum_{u \in K} \Pr_{y, z \in \F_2^k} \left[\forall i \in [t], \ip{y, M_iz} = u_i \right] \\
&= \sum_{u \in K} \E_{y, z} \left[\prod_{i \in [t]} \left( \frac{1 + (-1)^{\ip{y, M_iz} + u_i}}{2}\right) \right] \\
&=\sum_{u \in K} \E_{y, z} \left[\E_{S\subseteq [t]} (-1)^{\ip{y, M_S z}} \cdot (-1)^{\ip{u, 1_{S}}} \right] \, .\\
\end{align*}
Here, for every $S \subseteq [t]$, $1_S$ is the characteristic vector of $S$ in $t$ dimensions, and $M_S = \sum_{i \in S} M_i$. Simplifying further, we get, 
\begin{align*}
\Pr_{y, z \in \F_2^k} [Tr(X, y, z) = 0] &=\E_{S\subseteq [t]} \left[ \left( \E_{y, z}  (-1)^{\ip{y, M_S z}} \right) \cdot \left(\sum_{u \in K} (-1)^{\ip{u, 1_{S}}}\right) \right] \, .\\
\end{align*}
Now, we observe that the term $\left(\sum_{u \in K} (-1)^{\ip{u,
      1_{S}}}\right) = \abs{K}$ if and only if $1_S \in K^{\perp}$,
otherwise it equals zero. Also, from~\cref{thm:lowrankbias-2d}, we
know that $\left( \E_{y, z}  (-1)^{\ip{y, M_S z}} \right)=2^{-\rank{M_S}}$ is at at least $\max\{2^{-k}, 2^{-|S|}\}$. Plugging these into the inequality above, we have the following inequality. 
\begin{align*}
\Pr_{y, z \in \F_2^k} [Tr(X, y, z) = 0] &\geq \frac{\abs{K}}{2^t} \cdot \sum_{v \in K^{\perp}} \max\{2^{-k}, 2^{-|v|}\}  &&[\text{Here, } |v| \text{ is the Hamming weight of } v] \\
&\geq \E_{v \in K^{\perp}} \max\{2^{-k}, 2^{-|v|}\} &&[\text{ Since } \abs{K}\cdot \abs{K^{\perp}} = 2^t]
\end{align*}
Recall that the dimension of $K^{\perp}$ equals $k$. Now, 
\[
\E_{v \in K^{\perp}} \max\{2^{-k}, 2^{-|v|}\} = 2^{-k} + \E_{v \in K^{\perp}\setminus \{0^k\}} \max\{2^{-k}, 2^{-|v|}\}\, .
\] 
From~\cref{lem:bias of trace}, we know that the bias of $Tr(X, Y, Z)$ is at most $2\cdot 2^{-k} - 2^{-2k}$. Thus, it must be the case that $\E_{v \in K^{\perp}\setminus \{0^k\}} \max\{2^{-k}, 2^{-|v|}\} \leq (1-2^{-k})\cdot 2^{-k}$. But this is possible only if all the vectors in $K^{\perp}\setminus \{0^k\}$ have weight at least $k$. In this case, the space $K^{\perp}$ is a linear subspace of $\F_2^t$ of dimension $k$ such that every non-zero vector in it has Hamming weight at least $k$. From~\cref{thm:MRRW}, we get that $t \geq 3.52k$. This completes the proof.
\end{proof}

\subsection{Lower Bound on the Rank of Matrix Multiplication Tensor}
In this section, we obtain a lower bound on the rank of the matrix multiplication tensor by proving an upper bound on its bias. Even though better bounds are known for this tensor, 
our proof is a fairly straightforward application of our techniques, and we believe this is instructive. 

Our main technical observation in this section is the following lemma which gives an upper bound on the bias of $M_{n}(\overline{X}, \overline{Y}, \overline{Z})$ as each of the variables take values in $\F_2$.

\begin{lemma}\label{lem:bias of MM}
The bias of $M_n(\overline{X}, \overline{Y}, \overline{Z})$ is at most $n \cdot 2^{-\frac{3n^2}{4}}$.
\end{lemma}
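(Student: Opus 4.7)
The plan is to first integrate out $Z$ to reduce the problem to estimating $\Pr[XY = 0]$ for uniformly random matrices $X, Y$. For fixed $X, Y$, observe that $M_n(X, Y, Z) = \sum_{i, k} (XY)_{i,k} Z_{i,k}$ is an $\F_2$-linear form in $Z$ with coefficient matrix $XY$. Therefore $\E_Z[(-1)^{M_n(X,Y,Z)}]$ equals $1$ when $XY = 0$ and $0$ otherwise, yielding
$$\bias(M_n) = \Pr_{X, Y}[XY = 0].$$

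Now condition on $Y$. The event $XY = 0$ is equivalent to every row of $X$ lying in the left kernel of $Y$, a subspace of codimension $\rank(Y)$ in $\F_2^n$. Since the $n$ rows of $X$ are independent and uniform, $\Pr_X[XY = 0 \mid Y] = 2^{-n \cdot \rank(Y)}$. Hence
$$\bias(M_n) = \sum_{r=0}^{n} \Pr[\rank(Y) = r] \cdot 2^{-rn} \leq \sum_{r=0}^{n} \Pr[\rank(Y) \leq r] \cdot 2^{-rn}.$$

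To bound $\Pr[\rank(Y) \leq r]$, I would apply a union bound over all $(n-r)$-dimensional subspaces $W$ of $\F_2^n$ that can be contained in the left kernel of $Y$. The number of such subspaces is at most $O(2^{r(n-r)})$ by the standard Gaussian binomial estimate, and for each fixed $W$, the event $WY = 0$ requires each of the $n$ columns of $Y$ to lie independently in the $r$-dimensional space $W^\perp$, which happens with probability $2^{-n(n-r)}$. Multiplying gives $\Pr[\rank(Y) \leq r] = O(2^{-(n-r)^2})$.

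Substituting this into the sum and using the identity $(n-r)^2 + rn = \tfrac{3n^2}{4} + (r - \tfrac{n}{2})^2$, the bound collapses to $O(2^{-3n^2/4}) \cdot \sum_{r=0}^{n} 2^{-(r-n/2)^2}$. The sub-Gaussian tail makes the latter sum a small absolute constant, which is easily absorbed into the factor of $n$ claimed in the lemma. The main obstacle is merely careful bookkeeping of the Gaussian binomial constant and of the tail sum around $r = n/2$; the optimization is forced by the structure and no new ideas beyond rank-versus-null-space counting are required.
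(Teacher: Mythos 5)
Your argument follows the paper's proof step for step: reduce $\bias(M_n)$ to $\Pr_{X,Y}[XY=0]$, condition on $Y$ to get $\E_Y[2^{-n\rank(Y)}]$, bound the rank distribution by a union bound over subspaces via the Gaussian binomial count, and sum the resulting geometric-type series. The only cosmetic differences are that you bound $\Pr[\rank \le r]$ rather than $\Pr[\rank = r]$ and complete the square in the exponent (writing $(n-r)^2 + rn = \tfrac{3n^2}{4}+(r-n/2)^2$) where the paper bounds $\sum_r 2^{r(n-r)}$ directly by $n\cdot 2^{n^2/4}$; both routes yield the same $2^{-3n^2/4}$ up to a low-order factor.
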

Before proceeding with the proof, we note that~\autoref{lem:bias of MM} and~\cref{thm:lowrankbias} immediately imply a non-trivial lower bound on the tensor rank of $M_n$. 
\begin{theorem}\label{thm:trank of MM}
The tensor rank of $M_n$ is at least $\frac{3n^2}{4\log_2(4/3)}\geq 1.8n^2$.
\end{theorem}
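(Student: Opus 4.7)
The plan is to first establish \cref{lem:bias of MM} — the upper bound $\bias(M_n) \le n \cdot 2^{-3n^2/4}$ — and then invoke \cref{thm:lowrankbias} to deduce the tensor rank lower bound. Indeed, if $\rank(M_n) \le t$, then \cref{thm:lowrankbias} gives $\bias(M_n) \ge (3/4)^t$, so $(3/4)^t \le n \cdot 2^{-3n^2/4}$, which rearranges to $t \log_2(4/3) \ge 3n^2/4 - \log_2 n$, hence $t \ge \frac{3n^2/4 - \log_2 n}{\log_2(4/3)} \ge 1.8\, n^2$ for $n$ sufficiently large, as claimed.

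For the bias bound, the key step is to recognize $M_n$ in matrix form. Viewing $X, Y, Z$ as elements of $\F_2^{n\times n}$, one checks directly that $M_n(X, Y, Z) = \ip{XY, Z}$, where $\ip{A, B} := \sum_{i,j} A_{ij} B_{ij}$ is the standard $\F_2$-inner product on matrices. Averaging over $Z$ first collapses the expectation:
$$
\bias(M_n) \;=\; \abs{\E_{X, Y, Z}(-1)^{\ip{XY, Z}}} \;=\; \E_{X, Y}\left[\mathbbm{1}_{XY = 0}\right] \;=\; \Pr_{X, Y}[XY = 0],
$$
since $\ip{XY, Z}$ is an unbiased $\F_2$-linear form in $Z$ unless $XY = 0$. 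Next, condition on $X$: the equation $XY = 0$ says each of the $n$ columns of $Y$ lies in $\ker(X)$, so $\Pr_Y[XY = 0 \mid X] = 2^{-n \cdot \rank(X)}$ and hence $\bias(M_n) = \E_X[2^{-n \cdot \rank(X)}]$. Combined with the standard bound $\Pr[\rank(X) = r] \le 2^{-(n-r)^2}$ (which follows from the count $|\{X : \rank(X) \le r\}| \le \binom{n}{r}_2 \prod_{i=0}^{r-1}(2^n - 2^i) \le 2^{2nr - r^2}$), substituting $k = n - r$ the exponent in each term of the sum becomes $-n^2 + k(n-k)$, maximized at $k = n/2$ with value $-3n^2/4$. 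Summing over $k \in \{0, 1, \ldots, n\}$ yields $\bias(M_n) \le (n+1) \cdot 2^{-3n^2/4}$, which matches \cref{lem:bias of MM} up to absorbing the additive $1$.

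The argument has no real obstacle. The crucial insight is the identity $\bias(M_n) = \Pr[XY = 0]$ obtained from the matrix-trace reinterpretation of $M_n$; after that, the rest is routine counting of low-rank matrices over $\F_2$ together with a one-variable exponent optimization. The only mild delicacy is the discrepancy between the $(n+1)$ we obtain and the $n$ in the stated lemma, but this is inconsequential for the final asymptotic tensor rank bound of $1.8 n^2$.
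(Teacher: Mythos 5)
Your argument is correct and is essentially the paper's proof: you reinterpret $M_n(X,Y,Z)$ as the pairing of $XY$ with $Z$, average over $Z$ to get $\bias(M_n)=\Pr[XY=0]=\E[2^{-n\cdot\rank}]$, bound $\Pr[\rank=r]$ by $2^{-(n-r)^2}$, optimize the exponent at $r=n/2$, and then feed the resulting bias bound into \cref{thm:lowrankbias}. The only cosmetic differences are that you condition on $X$ rather than $Y$ and land on a factor of $(n+1)$ rather than $n$, neither of which affects the conclusion.
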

We now prove~\cref{lem:bias of MM}. 
\begin{proof}[Proof of~\cref{lem:bias of MM}]
We observe that for any two fixed matrices $x,y$, the 3-linear form $M_n$ reduces
to a linear form in $z$ which is non-zero iff the product of the two
matrices $x$ and $y$ is non-zero. Furthermore, given a matrix $y$, the probability (over $x$) that the product matrix $x\cdot
y$ is zero is exactly $2^{-n\cdot \rank(y)}$. Combining these observations, we have
\begin{align*}
\bias(M_n) & = \Pr_{x, y}\left[ x\cdot y = 0_{n\times n}\right]\\
&=\E_y \left[2^{-n\cdot \rank(y)} \right] \\
&=\sum_{r = 0}^n \Pr_{y}\left[ \rank(y) = r\right] \cdot 2^{-nr}\;.
\end{align*}
To complete the proof, we rely on the following claim, whose proof we defer to the end of this section.
\begin{claim}\label{clm:rank distr}
For every $r \in \set{0, 1, \ldots, n}$, the following inequality is true.
\[
\Pr_{y}\left[\rank(y) = r \right] \leq 2^{-(n-r)^2} \, .
\]
\end{claim}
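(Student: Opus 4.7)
The plan is to directly count $N_r$, the number of $n \times n$ matrices over $\F_2$ of rank exactly $r$, and then bound $N_r / 2^{n^2}$. Every rank-$r$ matrix $y$ has a well-defined $r$-dimensional row space $R \subseteq \F_2^n$ and is specified by the choice of $R$ together with the sequence of $n$ rows, each an element of $R$, that together span $R$. The number of $r$-dimensional subspaces of $\F_2^n$ equals the Gaussian binomial coefficient $\binom{n}{r}_2$, and, given $R$ with a fixed basis, the number of $n \times r$ coefficient matrices of full column rank is $\prod_{i=0}^{r-1}(2^n - 2^i)$. Combining gives the standard identity
\[
N_r = \binom{n}{r}_2 \cdot \prod_{i=0}^{r-1}(2^n - 2^i).
\]

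Next I would bound each factor. The trivial bound $\prod_{i=0}^{r-1}(2^n - 2^i) \leq 2^{nr}$ is sufficient. For the Gaussian binomial, I would use the standard estimate $\binom{n}{r}_2 \leq 2^{r(n-r)}$ (up to a bounded multiplicative constant $\prod_{j\geq 1}(1-2^{-j})^{-1}$). Multiplying these bounds yields $N_r \leq 2^{r(n-r)+nr} = 2^{2nr - r^2} = 2^{n^2 - (n-r)^2}$, and dividing through by $2^{n^2}$ produces the desired inequality
\[
\Pr_y[\rank(y) = r] \leq 2^{-(n-r)^2}.
\]

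An alternative approach, which I find cleaner and which matches the spirit of the row-revealing arguments used earlier in the paper (as in \cref{cor:onebyone}), is to expose the rows of $y$ one at a time and let $d_i$ be the dimension of the span of the first $i$ rows. The sequence $(d_0, d_1, \ldots, d_n)$ is a Markov chain with $\Pr[d_i = d_{i-1} \mid d_{i-1}] = 2^{d_{i-1}-n}$ and $\Pr[d_i = d_{i-1}+1 \mid d_{i-1}] = 1 - 2^{d_{i-1}-n}$. For a fixed trajectory with exactly $r$ rises and $n-r$ stalls, I would upper bound its probability by the product of just the stall terms and use the crude bound $d_{i-1} \leq r$ at every stall position; this yields a per-trajectory bound of $2^{-(n-r)(n-r)} = 2^{-(n-r)^2}$, and a more careful summation over trajectories recovers the claimed bound without the $\binom{n}{r}$ loss obtained from a naive union bound.

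The main delicate point is the precise constant in the Gaussian binomial estimate: the clean inequality $\binom{n}{r}_2 \leq 2^{r(n-r)}$ only holds up to the universal factor $\prod_{j\geq 1}(1-2^{-j})^{-1} \approx 3.46$. If one needs the bound exactly as stated, it is best obtained from the row-exposure argument, where one can exploit the fact that the per-trajectory probability is typically much smaller than the worst-case $2^{-(n-r)^2}$ (because $d_{i-1}$ is strictly less than $r$ at early stalls), so that the sum over the $\binom{n}{r}$ trajectories telescopes to a geometric series and stays within the stated bound. In any case, even with a universal multiplicative constant, the resulting estimate suffices for the application to $\bias(M_n) \leq n\cdot 2^{-3n^2/4}$, since that constant is absorbed into the polynomial prefactor $n$ in the final inequality.
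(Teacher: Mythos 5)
Your first approach mirrors the paper's: count (or union-bound over) $r$-dimensional row spaces and control the Gaussian binomial coefficient, and the worry you raise about that estimate is well founded. In fact it pinpoints a genuine error in the paper's proof of \cref{clm:rank distr}: the paper asserts
\[
\sqbinom{n}{r}_2 = \prod_{i=0}^{r-1}\frac{2^n-2^i}{2^r-2^i} \leq \frac{2^{nr}}{2^{r^2}},
\]
but the inequality is backwards. Since $2^r - 2^i < 2^r$ for $i\geq 0$, the denominator is strictly less than $2^{r^2}$, so in fact $\sqbinom{n}{r}_2 \geq 2^{r(n-r)}$, and the correct upper bound carries exactly the constant $c:=\prod_{j\geq 1}(1-2^{-j})^{-1}\approx 3.46$ that you mention.

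Your suggestion that a more careful row-exposure argument could recover the bound exactly as stated, however, cannot succeed: \cref{clm:rank distr} is simply false as written. Take $n=2$, $r=1$: there are $9$ rank-one $2\times 2$ matrices over $\F_2$ out of $16$, so $\Pr_y[\rank(y)=1]=9/16>1/2=2^{-(n-r)^2}$. Running your own Markov chain for this case confirms it: the two trajectories $(\mathrm{stall},\mathrm{rise})$ and $(\mathrm{rise},\mathrm{stall})$ have probabilities $\tfrac14\cdot\tfrac34=\tfrac{3}{16}$ and $\tfrac34\cdot\tfrac12=\tfrac{6}{16}$, summing to $\tfrac{9}{16}$. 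The subtlety is that the sum of probabilities over all trajectories ending at $d_n=r$ \emph{is} $\Pr[\rank(y)=r]$; dropping the rise factors and regrouping stall terms into a geometric series can only overestimate, and the clean $2^{-(n-r)^2}$ is merely the contribution of the single ``all stalls last'' trajectory, with the remaining trajectories adding strictly more.

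The correct statement is $\Pr_y[\rank(y)=r]\leq c\cdot 2^{-(n-r)^2}$ with $c$ as above, and you are right that this constant is harmless in the application to \cref{lem:bias of MM}, where it simply inflates the prefactor $n$ by a universal factor. But it is worth flagging explicitly that both the claim as printed and the Gaussian-binomial step in the paper's proof are off by this constant, rather than leaving open the possibility that the exact inequality could be salvaged.
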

From the claim above, we get 
\begin{align*}
\bias(M_n) &\leq \sum_{r = 0}^n 2^{-(n-r)^2 - nr} \\
&\leq \sum_{r = 0}^n 2^{-n^2 - r^2 + nr} \\
&\leq 2^{-n^2}\sum_{r = 0}^n 2^{ r(n-r)} \\
&\leq 2^{-n^2} n\cdot 2^{n^2/4} \\
&\leq n\cdot 2^{-3n^2/4}\; . \qedhere
\end{align*}
\end{proof}
For completeness, we now provide a proof of~\cref{clm:rank distr}. We remark
that the following tighter bound is known (see
\cite[Theorem~3.2.1]{Kolchin}).
\begin{align*}
\Pr_y\left[\rank(y) = r\right]
& = 2^{-(n-r)^2}\cdot
\prod_{i=n-r+1}^n\left(1-\frac1{2^i}\right) \cdot \left(\sum_{0 \leq
    i_1\leq \ldots i_{n-r}\leq r} \frac1{2^{i_1 + \ldots +
  i_{n-r}}}\right)\\
& \leq 2^{-(n-r)^2}\cdot
\prod_{i=n-r+1}^n\left(1-\frac1{2^i}\right) \cdot 
\prod_{i=1}^{n-r}\left(1-\frac1{2^i}\right)^{-1}\;.
\end{align*}
However, the weaker bound given in the claim suffices for our purposes.
\begin{proof}[Proof of~\cref{clm:rank distr}]
The goal is to upper bound the probability that a uniformly random
$n\times n$ matrix $y$ over $\F_2$ has rank equal to $r$. This
probability is upper bounded by the probability that the rows of $y$
are contained within a subspace of dimension $r$ of $\F_2^n$. For any
fixed subspace $S$ of dimension equal to $r$, this event happens with
a probability equal to $2^{-n(n-r)}$. The number of subspaces of
$\F_2^n$ of dimension equal to $r$ is given by the Gaussian binomial
coefficient ${\sqbinom{n}{r}}_2 = \prod_{i = 0}^{r-1} \frac{(2^n-2^i)}{(2^r-2^i)} \leq \frac{2^{nr}}{2^{r^2}}$. Thus, by a union bound, we get the following. 
\[
\Pr_{y}\left[\rank(y) = r \right] \leq \frac{2^{nr}}{2^{r^2}} \cdot 2^{-n(n-r)} = 2^{-(n-r)^2}
 \, .\qedhere\]
\end{proof}

\subsection*{Acknowledgements}
We would like to thank Suryateja Gavva for helpful discussions. 
We would like to thank Shubhangi Saraf for suggesting the idea 
for the proof of \cref{lem:bias ub for random tensor}.

{\small
\bibliographystyle{prahladhurl}
\bibliography{ref}

}
\appendix
\section{A maximization problem}
In this section, we prove~\cref{thm:max}. We start with restating it here.
\begin{theorem}[Restatement of~\cref{thm:max}]
Let $k$ be a positive integer, and let $u \in [0, k^2]$ be a real number.
Suppose $b_1, b_2, \ldots, b_k$ are real numbers satisfying the following constraints.
\begin{align}
k \geq b_1 \geq b_2 \ldots \geq b_k \geq 0, \label{ordereq1}\\
\sum_{i=1}^k b_i = u. \label{sumeq1}
\end{align}
Then,
$$ \sum_{i=1}^{k} 2^{i-1} 2^{b_i} \leq \sum_{i=1}^{k} 2^{i-1} 2^{u/k} = (2^k-1) 2^{u/k}.$$
\end{theorem}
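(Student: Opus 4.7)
The plan is to exploit the convexity of the objective $F(b) := \sum_{i=1}^{k} 2^{i-1} 2^{b_i}$ (a sum of convex functions in the $b_i$) to reduce to the vertices of the feasible polytope $\mathcal{K} := \{b \in \mathbb{R}^k : k \geq b_1 \geq \ldots \geq b_k \geq 0,\ \sum_i b_i = u\}$. Since $F$ is convex and $\mathcal{K}$ is a compact convex polytope, $\max_{\mathcal{K}} F$ is attained at a vertex, so it suffices to verify the inequality at each vertex of $\mathcal{K}$.

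A tightness count of the defining inequalities ($b_1 \leq k$, $b_i \geq b_{i+1}$, $b_k \geq 0$) shows that every vertex of $\mathcal{K}$ has a three-block structure: there exist non-negative integers $n_1, n_2, n_3$ with $n_1 + n_2 + n_3 = k$ and a real $c \in [0,k]$ such that $b_i = k$ for $i \leq n_1$, $b_i = c$ for $n_1 < i \leq n_1+n_2$, and $b_i = 0$ for $i > n_1+n_2$, with the sum constraint giving $n_1 k + n_2 c = u$. Degenerate cases (the uniform vertex $b_i = u/k$, pure staircases with $b_i \in \{0,k\}$, or $c \in \{0,k\}$) yield equality in the theorem by a direct check, so assume $n_2 \geq 1$ and $c \in (0,k)$.

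Writing $A = 2^{n_1}$, $B = 2^{n_2}$, $C = 2^{n_3}$, $P = ABC = 2^k$, and $t = c/k \in (0,1)$, a direct evaluation of the two sides gives
\[
F(b) = P(A-1) + A(B-1)\,P^t + AB(C-1), \qquad (2^k-1)\,2^{u/k} = A(P-1)\,B^t.
\]
Dividing by the positive factor $A$ and simplifying using $P/A = BC$, the desired inequality becomes
\[
h(t) \geq P - B, \quad \text{where} \quad h(t) := (P-1)\,B^t - (B-1)\,P^t,
\]
and one checks $h(0) = h(1) = P - B$ directly. The task thus reduces to showing that $h$ attains its minimum on $[0,1]$ at the endpoints; this will be the main technical step.

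To finish, I plan to establish that $h$ is \emph{unimodal} on $[0,1]$ by analyzing $h'$. Rewriting $h'(t)/[(B-1)\,P^t \ln P] = \frac{(P-1)\ln B}{(B-1)\ln P}\,(B/P)^t - 1$, which is strictly monotone in $t$ since $B/P < 1$ in the non-degenerate case, shows that $h'$ has at most one zero on $[0,1]$. Combining this with $h'(0) \geq 0$ and $h'(1) \leq 0$---which follow, respectively, from the elementary facts that $x \mapsto \ln(x)/(x-1)$ is decreasing on $(1,\infty)$ and $x \mapsto x\ln(x)/(x-1)$ is increasing on $(1,\infty)$---forces $h$ to be non-decreasing on $[0,t^*]$ and non-increasing on $[t^*,1]$ for some $t^* \in [0,1]$. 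Consequently $h(t) \geq \min(h(0),h(1)) = P - B$ for all $t \in [0,1]$, completing the proof. The main obstacle is the algebraic reduction to the clean single-variable inequality $h(t) \geq P - B$; once that reformulation is isolated, the unimodality argument is quite clean.
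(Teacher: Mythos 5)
Your proposal is correct and follows essentially the same route as the paper: both use convexity of $F$ to reduce the maximization to the vertices of the polytope, both identify the same three-block vertex structure (a prefix at $k$, a middle block at some value $c$, a suffix at $0$), and both reduce the verification at such a vertex to a clean exponential inequality in two parameters. Your final inequality $h(t) = (P-1)B^t - (B-1)P^t \geq P-B$ is, after rearrangement, literally the paper's Lemma~\ref{lem:betalambdaineq} with $z=P$, $z^\beta = B$, $z^\lambda = P^t$, $z^{\beta\lambda} = B^t$; the only divergence is the finishing step: the paper substitutes $r = 1/\lambda$ and invokes monotonicity of $(x^r-1)/(x-1)$ in $x$, whereas you argue that $h$ is unimodal on $[0,1]$ (via the sign pattern of $h'$ together with $h'(0)\geq 0$, $h'(1)\leq 0$) with both endpoints equal to the target value. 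Both are short calculus verifications of the same lemma, so I would count this as the same proof with a cosmetically different last lemma.
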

\begin{proof}
Let $\calP$ denote the convex polytope defined as follows.
$$ \calP = \{ (x_1, \ldots, x_n) \in \mathbb R^n \mid k \geq x_1 \geq \ldots \geq x_k \geq 0 \mbox{ and } \sum_i x_i = u \}.$$
Let $f : \mathbb R^n \to \mathbb R$ be the function:
$$ \sum_{i=1}^k 2^{i-1} 2^{x_i}.$$
Observe that $\calP$ is bounded and nonempty, and $f$ is a convex function.
Thus the maximum $M$ of $f$ on $\calP$ is achieved at an extreme point.
Since $\calP$ is defined by $k+1$ inequalities and $1$ equality,
extreme points satisfy the $1$ equality and make at least $k-1$ of the inequalties tight.
Thus any extreme point  $(y_1, y_2, \ldots, y_k)$ of $\calP$ satisfies, for some
integers $a, b,c \geq 0$ with $a+b+c = k$, and some $\ell \in (0,k)$, the following equalities.
$$ y_1 = y_2 = \ldots = y_a = k,$$
$$y_{a+1} = y_{a+2} = \ldots = y_{a+b} = \ell,$$
$$ y_{a+b+1} = y_{a+b+2} = \ldots = y_{k} = 0.$$
$$ ak + b\ell = u.$$
At such an extreme point $(y_1, \ldots, y_k)$, the value of $f$ can be expressed in terms of $a,b,c,\ell$ as
\begin{align*}
f(y_1, \ldots, y_k) &= 2^k (2^{a}-1) + 2^{\ell} 2^{a} (2^b - 1) + 2^0 2^{a+b} (2^c - 1)\\
&= 2^{a} (2^k - 2^\ell) + 2^{a + b} (2^\ell - 1).
\end{align*}
The following lemma then completes the proof of the theorem.

\begin{lemma}
Let $k$ be a positive integer, and let $u \in [0, k^2]$ be a real number.
Let $a, b, c, \ell \in [0, k]$ be real numbers with:
$$a+b+c =k,$$
$$ak + b\ell = u.$$
Then 
$$2^{a} (2^k - 2^\ell) + 2^{k-c} (2^\ell - 1) \leq (2^k - 1)2^{u/k}.$$
\end{lemma}
\begin{proof}
Let $\alpha, \beta, \gamma, \lambda, \eta \in [0,1]$ be given by
\begin{align*}
 a= \alpha k, b = \beta k,  c = \gamma k, \ell = \lambda k, u = \eta k^2.
\end{align*}
Then, we have
\begin{align}
\alpha + \beta + \gamma = 1,\label{eqcond1}\\
 \alpha + \beta \lambda = \eta. \label{eqcond2}
\end{align}
Let $Z = 2^k$.
Then we want to show that whenever $\alpha, \beta, \gamma, \lambda, \eta$ are as above,
we have
$$ Z^{\alpha} ( Z - Z^{\lambda}) + Z^{1-\gamma} ( Z^{\lambda} - 1) \leq (Z-1) Z^{\eta}.$$
Eliminating $\alpha, \gamma$ from Equation~\eqref{eqcond1} and Equation~\eqref{eqcond2},
we have $ \gamma = (1 - \eta) - \beta(1-\lambda)$, and $\alpha = \eta - \beta \lambda$.
Substituting this in, we want to show that
$$ Z^{\eta - \beta \lambda} (Z - Z^{\lambda}) + Z^{\eta + \beta(1-\lambda)} (Z^{\lambda} - 1) \leq (Z-1) Z^{\eta}.$$
Dividing throughout by $Z^\eta$, we want to show that
$$ Z^{-\beta \lambda} (Z - Z^{\lambda}) + Z^{\beta ( 1- \lambda)}(Z^{\lambda} - 1) \leq Z-1.$$
Rewriting, this is the same as
$$ Z - Z^{\lambda} + Z^{\beta + \lambda} - Z^{\beta} \leq (Z -1) Z^{\beta \lambda},$$
which is equivalent to 
$$ (Z^\beta - 1 ) (Z^\lambda - 1) \leq (Z^{\beta \lambda} - 1) (Z-1).$$
This follows from Lemma~\ref{lem:betalambdaineq}.
\end{proof}
This completes the proof.
\end{proof}

\section{Numerical Inequalities}

In this section we list some numerical inequalites that are used in the previous section.

\begin{lemma}
\label{lem:increasingineq}
For all real $r \geq 1$, the function $f : [1, \infty) \to \mathbb R$ given by
$$ f(x)= \frac{ x^r - 1}{x-1}$$
is increasing in $x$.
\end{lemma}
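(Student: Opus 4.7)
The plan is to represent $f(x)$ as the running average of a monotone function, and then invoke the elementary fact that the running average of a non-decreasing function is itself non-decreasing. Concretely, for $x > 1$, I would write
\[
f(x) \;=\; \frac{x^r - 1}{x-1} \;=\; \frac{1}{x-1}\int_1^x r\,t^{r-1}\,dt,
\]
so that $f(x)$ is exactly the average value of $h(t) := r\,t^{r-1}$ over the interval $[1,x]$. The value $f(1)$ should be interpreted as the limit, which equals $h(1) = r$, and continuity of $f$ at $1$ will then make the conclusion on the closed ray $[1,\infty)$ immediate.

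The first key step is to show that $h(t) = r t^{r-1}$ is non-decreasing on $[1,\infty)$. This is routine: $h'(t) = r(r-1)t^{r-2} \geq 0$ for $t \geq 1$ and $r \geq 1$ (with equality in the trivial case $r = 1$, where $h$ is constant and $f \equiv 1$ is trivially non-decreasing).

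The second key step is the standard averaging lemma: if $h$ is non-decreasing on $[1,\infty)$ and $H(x) := \int_1^x h(t)\,dt$, then $F(x) := H(x)/(x-1)$ is non-decreasing. I would verify this by differentiating,
\[
F'(x) \;=\; \frac{h(x)(x-1) - H(x)}{(x-1)^2},
\]
and noting that $h(t) \leq h(x)$ for all $t \in [1,x]$ gives $H(x) = \int_1^x h(t)\,dt \leq h(x)(x-1)$, so the numerator is non-negative. Combining this with $f(x) = F(x)$ finishes the proof.

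There is no genuine obstacle: the statement is a standard calculus exercise. As a sanity check, one can also verify it directly via the quotient rule, observing that $(x-1)^2 f'(x) = (r-1)x^r - rx^{r-1} + 1$ vanishes at $x = 1$ and has derivative $r(r-1)x^{r-2}(x-1) \geq 0$ on $[1,\infty)$, so is non-negative there; this is the route I would fall back on if for some reason the integral formulation felt unnatural in context.
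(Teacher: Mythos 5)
Your primary argument is correct and takes a genuinely different route from the paper. You rewrite $f(x) = \frac{1}{x-1}\int_1^x r t^{r-1}\,dt$ and invoke the fact that the running average of a non-decreasing function is non-decreasing, reducing the claim to monotonicity of $h(t) = r t^{r-1}$ on $[1,\infty)$. The paper instead differentiates $f$ directly via the quotient rule, sets $g(x) = (r-1)x^r - r x^{r-1} + 1$ for the numerator of $(x-1)^2 f'(x)$, and shows $g \geq 0$ on $[1,\infty)$ by observing $g(1)=0$ and $g'(x) = r(r-1)x^{r-2}(x-1) \geq 0$. The two proofs are close cousins: your averaging lemma, once unpacked, is exactly the assertion $h(x)(x-1) - H(x) \geq 0$, which is the paper's $g(x) \geq 0$ in disguise. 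What your framing buys is conceptual clarity — it makes visible that monotonicity of $f$ is a consequence of monotonicity of the integrand, with no need to guess the auxiliary function $g$ — while the paper's route is entirely self-contained and avoids appealing to (or re-proving) the averaging fact. You also record the quotient-rule computation as a fallback, and that fallback coincides with the paper's proof essentially verbatim.
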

\begin{proof}
We show that $f'(x) \geq 0$ for all $x \geq 1$.
Compute
$$ f'(x) = \frac{  (r x^{r-1}) \cdot (x-1) - (x^r - 1) \cdot 1 }{(x-1)^2}$$
Define 
$$g(x) = r (x^{r} - x^{r-1}) - (x^r - 1) = (r-1) x^r - r x^{r-1} + 1.$$
The positivity of $f'(x)$ would follow if we can show:
$$ g(x) \geq 0$$
for all $x \geq 1$.
We prove this by first observing that $g(1) = 0$,
and then showing that for all $x \geq 1$, we have $g'(x) \geq 0$.
Indeed, 
\begin{align*}
g'(x) &= r (r-1) x^{r-1} - r (r-1) x^{r-2} \\
&= r(r-1) (x^{r-2}) (x - 1) \\
& \geq 0.
\end{align*}
This completes the proof that $g(x) \geq 0$ for all $x \geq 1$, and
thus the proof that $f'(x) \geq 0$ for all $x \geq 1$.
\end{proof}

\begin{lemma}
\label{lem:betalambdaineq}
For all real $z \geq 1$ and all real $\beta, \lambda \in [0, 1]$, we have:
$$ (z^{\lambda} - 1) (z^{\beta} - 1) \leq (z^{\beta \lambda} - 1) ( z- 1).$$
\end{lemma}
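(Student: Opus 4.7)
The plan is to reduce the inequality to the monotonicity statement already established in Lemma~\ref{lem:increasingineq}. First, I would dispose of the degenerate cases. If $z=1$, or $\beta=0$, or $\lambda=0$, then at least one factor on each side equals $0$, so both sides vanish and the inequality holds trivially. Thus I may assume $z>1$ and $\beta,\lambda\in(0,1]$, so that every quantity $z-1$, $z^\beta-1$, $z^\lambda-1$, $z^{\beta\lambda}-1$ is strictly positive and we can freely divide.

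After that reduction, I would rewrite the desired inequality as
\[
\frac{z^{\lambda}-1}{z-1}\;\leq\;\frac{z^{\beta\lambda}-1}{z^{\beta}-1}.
\]
The clean substitution is $w:=z^{\beta}$, which lies in the interval $(1,z]$ since $z>1$ and $\beta\in(0,1]$. Under this substitution $z^{\beta\lambda}=w^{\lambda}$, so the right-hand side becomes $\frac{w^{\lambda}-1}{w-1}$. Thus it suffices to show that the function
\[
g(y)\;:=\;\frac{y^{\lambda}-1}{y-1}
\]
is \emph{non-increasing} in $y$ for $y>1$ (with $\lambda\in(0,1]$ fixed); applying this to $w\leq z$ then gives $g(w)\geq g(z)$, which is exactly what we want.

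To prove that $g$ is non-increasing, I would set $r:=1/\lambda\geq 1$ and substitute $x:=y^{\lambda}$, so that $y=x^{r}$ and
\[
g(y)\;=\;\frac{x-1}{x^{r}-1}\;=\;\frac{1}{f(x)},
\]
where $f(x)=\frac{x^{r}-1}{x-1}$ is the function appearing in Lemma~\ref{lem:increasingineq}. That lemma states that $f$ is increasing on $[1,\infty)$ whenever $r\geq 1$, so $1/f(x)$ is decreasing in $x$; since $x=y^{\lambda}$ is increasing in $y$, it follows that $g(y)$ is decreasing in $y$, completing the proof.

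The only genuinely substantive step is the monotonicity of $f$, which has already been proved. Given that, the main ``obstacle'' is simply spotting the right substitution $w=z^{\beta}$ that turns the asymmetric-looking inequality into a monotonicity comparison of $g$ at two points of $[1,z]$; the rest is bookkeeping. Symmetry in $\beta$ and $\lambda$ is a useful sanity check (one could equally have set $w=z^{\lambda}$ and compared exponents $\beta$ instead).
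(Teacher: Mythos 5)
Your proof is correct and is essentially the same argument as the paper's: both reduce the inequality to the monotonicity of $f(x)=\frac{x^r-1}{x-1}$ from Lemma~\ref{lem:increasingineq} via the substitution $r=1/\lambda$, and once your ``$g$ is non-increasing'' step is unpacked it compares $f$ at exactly the same two points $z^{\beta\lambda}\leq z^{\lambda}$ that the paper uses directly. The only cosmetic difference is that you phrase the comparison through an auxiliary function $g=1/f$, whereas the paper substitutes $x=z^{\lambda}$, $y=z^{\beta\lambda}$ and invokes $f(y)\leq f(x)$ outright.
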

\begin{proof}
If either $\lambda = 0$ or $z = 1$, the inequality trivially holds (with equality).
Now suppose $\lambda \neq 0$ and $y \neq 1$.
Set $r = 1/\lambda$ and $x = z^{\lambda}$ and $y = z^{\beta \lambda}$.
Then, $1 \leq  y\leq x$ and $r \geq 1$.
Then, the inequality we want to prove can be written as follows.
$$ (x-1) (y^r -1) \leq (y-1) (x^r -1),$$
i.e.,
$$ \frac{y^r - 1}{y-1} \leq \frac{x^r - 1}{x-1}.$$
This follows from Lemma~\ref{lem:increasingineq}, completing the proof.
\end{proof}

\end{document}
